\newtheorem{theorem}{Theorem}
\newtheorem{lemma}[theorem]{Lemma}
\newtheorem{corollary}[theorem]{Corollary}
\newtheorem{definition}{Definition}
\DeclareMathOperator*{\Exp}{\mathbb{E}}
\newcommand{\N}{\mathbb{N}}
\newcommand{\Z}{\mathbb{Z}}
\newcommand{\encode}[1]{\mathsf{C}(#1)}
\newcommand{\sencode}[1]{\mathsf{R}(#1)}
\newcommand{\knuth}[1]{\mathsf{K}(#1)}
\newcommand{\oddpairs}{\mathbin{\lozenge_1}}
\newcommand{\evenpairs}{\mathbin{\lozenge_0}}
\newcommand{\sevenpairs}{\mathbin{\blacklozenge_0}}
\newcommand{\soddpairs}{\mathbin{\blacklozenge_1}}
\newcommand{\wt}{\operatorname{wt}}
\newcommand{\ilog}{\operatorname{{\log}^\sharp}}
\newcommand{\shift}{\textsf{S}}
\newcommand{\unique}[1]{\textsf{U}(#1)}
\newcommand{\maximal}[1]{\textsf{M}(#1)}
\newcommand{\setsystem}[1]{\mathcal{#1}}
\newcommand{\Hide}[1]{}
\title{Deterministic Blind   Rendezvous\\  in
  Cognitive Radio Networks}
\author{Sixia Chen\qquad Alexander Russell\\Department of Computer
  Science and Engineering\\ University of Connecticut \and Abhishek
  Samanta \qquad Ravi Sundaram\\ College of Computer Science\\
  Northeastern University}
\begin{document}
\maketitle
\thispagestyle{empty}

\begin{abstract}
\emph{Blind rendezvous}  is a  fundamental problem in  cognitive radio
networks. The  problem involves a  collection of agents  (radios) that
wish  to  discover each  other  (i.e.,  \emph{rendezvous}) in  the
\emph{blind} setting where there  is no shared infrastructure and they
initially  have no  knowledge of  each  other.  Time  is divided  into
discrete slots and spectrum is  divided into discrete channels, $[n] =
\{1,2,\ldots, n\}$. Each agent may access  (or \emph{hop on})
 a single channel in
a single time slot and two agents \emph{rendezvous}  when they
hop on
the  same channel  in  the same  time  slot.  The  goal  is to  design
deterministic channel hopping  schedules  for  each agent  so  as  to
guarantee rendezvous between any pair of agents with access
to overlapping sets of channels.

The problem  has three complicating considerations:  first, the agents
are  \emph{asymmetric}, i.e.,  each agent  $A_i$ only  has access  to a
particular  subset $S_i  \subset [n]$  of the  channels  and different
agents may have access to  different subsets of channels (clearly, two
agents can rendezvous only  if their channel subsets overlap); second,
the agents are \emph{asynchronous}, i.e., they do not possess a common
sense of absolute time, so different agents may commence their channel
schedules  at different times  (they do  have a  common sense  of slot
duration);  lastly,  agents are  \emph{anonymous}  i.e.,  they do  not
possess an identity, and hence the schedule for $A_i$ must depend only
on $S_i$.

Whether guaranteed blind rendezvous in the asynchronous model was even
achievable  was  an  open  problem.   In a  recent  breakthrough,  two
independent sets of authors, Shin et al.~[Communications Letters 2010]
and  Lin   et  al.~[INFOCOM   2011],  gave  the   first  constructions
guaranteeing  asynchronous blind rendezvous  in $O(n^2)$  and $O(n^3)$
time,   respectively.   We  present   a  substantially   improved  and
conceptually  simpler construction guaranteeing  that any  two agents,
$A_i$, $A_j$, will rendezvous  in $O(|S_i| |S_j|\log\log n)$ time. Our
results are  the first that achieve nontrivial  dependence on $|S_i|$,
the  size  of the  set  of available  channels.  This  allows us,  for
example, to  save roughly  a quadratic factor  over the  best previous
results  in the  important  case when  channel  subsets have  constant
size. We  also achieve  the best possible  bound of  $O(1)$ rendezvous
time for  the symmetric situation;  previous works could do  no better
than  $O(n)$.  Using  the probabilistic  method and  Ramsey  theory we
provide evidence in support of  our suspicion that our construction is
asymptotically  optimal  (upto   constants)  for  small  size  channel
subsets:  we show  both an  $\Omega(|S_i| |S_j|)$  lower bound  and an
$\Omega(\log\log n)$ lower bound when $|S_i|, |S_j| \leq n/2$.
\end{abstract}
\newpage
\pagenumbering{arabic}

\section{Introduction}

 \subsection{Motivation}
 Given   the   ever-increasing  demand   for   all  things   wireless,
 \emph{spectrum}   has  become   a  scarce   resource.   Historically,
 regulators  around the  world  have employed  a  command and  control
 philosophy  towards managing  spectrum \cite{wikispectrummanagement}:
 Some  channels  were statically  licensed  to  particular users  (for
 certain periods  and in certain  geographies) while others  were kept
 aside  for  community  use.   \emph{Cognitive  radio  networks}  have
 emerged  as  a  modern,   dynamic  approach  to  spectrum  allocation
 \cite{Zhao07asurvey,Akyildiz06dynamic}.        Exploiting      recent
 technological  developments,  cognitive  agents (radios)  dynamically
 sense incumbent  users and opportunistically hop  to unused channels.
 While  they   can  offer  improved  utilization,   they  introduce  a
 fundamental \emph{rendezvous} problem: the problem of discovering the
 existence of peers in a multichannel setting.


\subsection{Model and  Results}
We work in  the \emph{blind} model where a  collection of agents $A_i$
wish to discover  each other with no dedicated  common control channel
or other  shared infrastructure.  Time is divided  into discrete slots
and spectrum is divided into discrete channels, $[n] = \{1,2,\ldots, n\}$.
Each agent  may access (or hop on)  a single channel in  a single time
slot  and two  agents  \emph{rendezvous}  when they  hop  on the  same
channel  in  the  same  time  slot.   The challenge  is  to  design  a
channel-hopping  schedule for each  agent so  that they  discover each
other. As stated thus far,  the problem has the trivial solution where
all agents can  hop on a specific channel, say channel  1, in the very
first time slot.  However,  reality is complicated by three additional
requirements:       \emph{asymmetry},       \emph{asynchrony}      and
\emph{anonymity}.

\noindent
{\bf Asymmetry} Different agents  may have access to different subsets
of channels as  a result of local interference  or variations in radio
capabilities.  Let  $S_i \subseteq [n]$  be the subset of  channels to
which agent  $A_i$ has  access.  Thus the  challenge is to  create for
each agent $A_i$ a  channel-hopping \emph{schedule} $\sigma_i: \{0, 1,
\ldots\}  \rightarrow   S_i$  which  guarantees   that  $\exists\,  t,
\sigma_i(t)  = \sigma_j(t)$  for any  two agents  $A_i,\,  A_j, \mbox{
  s.t. } S_i  \cap S_j \neq \emptyset$. (In  the symmetric setting all
agents have access to the identical subset of channels.)

\noindent
{\bf Asynchrony}  Different agents  may not share  a common  notion of
time.   They may commence  at different  ``wake-up'' times  inducing a
relative shift  in their progress  through their schedules.  Note that
agents do possess a common  understanding of slot duration.  The goal,
therefore, is  to ensure  rendezvous between a  pair of agents  in the
shortest  possible  time  once  they  have  both  woken  up.  (In  the
synchronous  setting all  agents  share a  common  notion of  absolute
time.)

\noindent
{\bf Anonymity} In our setting an  agent's schedule must depend only on
the subset  of channels available to,  and not on  a distinct identity
of, the  agent i.e., $\sigma_i$ must  depend only on  $S_i$. Note that
$S_j$ is  unknown to $A_i$ for  $i \neq j$  and it is allowed  for two
different agents  to have the  same set of accessible  channels, i.e.,
$S_i = S_j$ for $i \neq j$.

Now,  the problem  has the  naive randomized  solution, in  which each
agent,   at  each  time   step,  selects   a  channel   uniformly  and
independently at random  from its subset.  It is not  hard to see that
this provides  a high-probability  guarantee of rendezvous  for agents
$A_i,\,   A_j$  in  time   $O(|S_i|  |S_j|   \log  n)$.    However,  the
deterministic  setting is  the  gold-standard in  the cognitive  radio
networking  community:  it makes  the  weakest  assumptions about  the
devices, which  need not have  an available source of  randomness, and
provides absolute guarantees on rendezvous time.

Here we briefly summarize of our main results:\\
\noindent
{\bf Algorithms}
  \begin{enumerate}

  \item\label{item:two} We give an  $O(\log\log n)$ time algorithm for
    rendezvous for the special case of agents with $|S_i| = 2$.

  \item We then  show how to apply this  algorithm to yield algorithms
    for  arbitrary subsets  of $[n]$  that guarantees  rendezvous time
    $O(|S_i|  |S_j| \log  \log n)$  for all  pairs of  sets  $S_i$ and
    $S_j$.

  \item We show that a minor adaptation of this algorithm can furthermore guarantee $O(1)$ time rendezvous for the symmetric case.
    
  \item Finally, we explore the ``one bit beacon'' case, where the
    agents have the luxury of a single common random bit during each
    time slot. In this model, we show that $O(|S_i| + |S_j| + \log n)$
    time is sufficient, with high probability, to rendezvous.
\end{enumerate}
\smallskip

\noindent{\bf Lower Bounds}
  \begin{enumerate}
  \item We prove an $\Omega(\log\log n)$ lower bound on the rendezvous
    time, even for \emph{synchronous} agents with the promise that the
    channel  sets  $S_i$ have  constant  size.  This  shows that  some
    dependence on  $n$, the  size of the  channel universe,  is always
    necessary. 
    In particular, this shows that
    the algorithm of~\ref{item:two} above is tight up to a constant.
  \item For channel  subsets of size $k$ we prove  a $k^2$ lower bound
    on even the synchronous rendezvous time, under the promise that $k
    = O(\log  n/\log\log n)$.  For larger values  of $k$, we  obtain a
    weaker family of results.
  \item  In the asynchronous  time model,  we prove  that $|S_i||S_j|$
    steps are necessary to rendezvous,  so long as $|S_i| + |S_j| \leq
    n$.
  \end{enumerate}

  We  also consider  a one-round  version of  the problem;  instead of
  minimizing  the  number  of   rounds  we  consider  the  problem  of
  maximizing the number of pairs of agents that can achieve rendezvous
  in a  single round. In  particular for the ``graphical''  case where
  channel sets are  of size 2 we show how a  variant of the celebrated
  Goemans-Williamson   semi-definite  program   \cite{GoemansW95}  for
  MAX-CUT  can be  employed to  obtain a  0.439 approximation  for the
  one-round  maximization version.  This  result is  presented in  the
  Appendix.

\subsection{Related work} 
Rendezvous problems  have a  long history in  mathematics and computer science---an early
example    is    Rado's    famous    ``Lion    and    Man''    problem
\cite{littlewoodmiscellany}.   Over  time a  variety  of problems  and
solutions  have  evolved in  both  adversarial \cite{rufusisaacs}  and
cooperative settings \cite{alperngal}. Rendezvous in networks has been
extensively    studied    in    the   computer    science    community
\cite{PelcSurvey}. Though  the study of rendezvous  in cognitive radio
networks  is relatively  recent there  already exists  a comprehensive
survey  \cite{surveycrn}  that contains  a  detailed  taxonomy of  the
different models including the specific one relevant to this work. The
problem of guaranteed blind rendezvous in the asymmetric, asynchronous
and  anonymous  case  was  first considered  in  \cite{seq-based}  and
subsequently in \cite{2n+1-seq, ring-walk}.   The use of prime numbers
and modular algorithms was  initiated in \cite{rend-crn}. However, the
general case of the  problem withstood attack until \cite{shinyangkim,
  LLCL:Jump}. The  current state of the art  is \cite{secon2013} which
achieves an $O(n^2)$ algorithm for  the asymmetric case and $O(n)$ for
the symmetric  case. A crucial difference  between these constructions
and  ours is that  we explicitly  exploit the  fact that  the schedule
$\sigma_i$  can  depend  arbitrarily  on $S_i$,  whereas  the  earlier
constructions  \cite{shinyangkim,  LLCL:Jump,  secon2013}  derive  the
schedule  for a channel  subset by  (essentially) projecting  onto the
desired subset from a single uniformly generated schedule for the full
set of  channels.  Our  work is notable  for providing  a conceptually
clean  and  significantly  more  efficient  $O(|S_i||S_j|\log\log  n)$
algorithm  for the general  asymmetric setting.   Real-world cognitive
networks  \cite{YucekA09}  create  a  pooled  hyperspace  occupied  by
signals with  dimensions of frequency, time, space,  angle of arrival,
etc., created  by advances in antenna design,  and comprising spectrum
ranging from radio frequencies and TV-band white spaces to lasers.  In
these networks  the total number of  channels, $n$ is  large while the
channel subsets accessible  to any given device are  small.  A similar
situation prevails in military situations where different members of a
(dynamic)  coalition  operate in  a  small  portion  of the  available
spectrum that  guarantees overlap with allies. In  such situations our
scheme  achieves  a  near-quadratic  factor  gain  over  the  previous
results.   And for  the  symmetric setting  our construction  achieves
$O(1)$   rendezvous   time   which   clearly   cannot   be   bettered.
Table~\ref{table:results} presents  a summary  of our upper  bounds in
the context of prior work.

\begin{table}[ht]
\caption{Upper bounds for deteministic rendezvous}
\centering
\begin{tabular}{|l|c|c|}
\hline
Paper & Asymmetric & Symmetric \\
\hline \hline
Shin-Yang-Kim~\cite{shinyangkim} & $O(n^2)$ & $O(n^2)$ \\
Lin-Liu-Chu-Leun~\cite{LLCL:Jump} & $O(n^3)$ & $O(n)$ \\
Gu-Hua-Wang-Lau~\cite{secon2013} & $O(n^2)$ & $O(n)$ \\
Our results & $O(|S_i||S_j|\log\log  n)$ & $O(1)$ \\
\hline
\end{tabular}
\label{table:results}
\end{table}

We are also
the  first to  provide nontrivial  lower bounds  employing  tools from
Ramsey theory and the  probabilistic method.  \cite{GasieniecPP01} is a
closely related work; its globally synchronous and locally synchronous
models   correspond  to  our   asynchronous  and   synchronous  models
respectively.  However, while \cite{GasieniecPP01}  explicitly requires
that  exactly one  node  transmits on  a  single fixed  channel for  a
successful broadcast,  we implicitly assume  that once a  set achieves
rendezvous  (on any  one of  several  channels) then  they employ  the
standard  ``chirp  and  listen'' technique  \cite{ZhangLuoGuo2013}  to
ensure mutual identification of the set members.

\section{Definitions and notation}

Let  $\setsystem{S}$   be  a  collection  of  subsets   of  $[n]$.  An
\emph{$\setsystem{S}$-schedule} is a family of schedules $\sigma_S: \N
\rightarrow S$, one for each  $S \in \setsystem{S}$. In fact, we focus
solely on two special cases:
  \begin{itemize}
  \item  An  \emph{$n$-schedule}  is  a $2^{[n]}$-schedule,  one  that
    supplies a schedule for every subset of $[n]$.
  \item  An  \emph{$(n,k)$-schedule}  is  a  $\setsystem{S}$-schedule,
    where  $\setsystem{S}$ consists of  all subsets  of $[n]$  of size
    $k$.
  \end{itemize}
  We will  typically reserve the notation $\Sigma  = (\sigma_A)_{A \in
    \setsystem{S}}$  to denote an  $\setsystem{S}$-schedule; departing
  from the notation used  in the introduction, the schedule associated
  with the set $A$ is simply denoted $\sigma_A$.
  
  Let $\sigma_A: \N \rightarrow A$ and $\sigma_B: \N \rightarrow B$ be
  two schedules for  overlapping subsets $A$ and $B$  of $[n]$. We say
  that $\sigma_A$ and $\sigma_B$  \emph{rendezvous synchronously in time $T$}
  if there is  a time $t \leq T$ so  that $\sigma_A(t) = \sigma_B(t)$;
  this  corresponds  to  rendezvous   in  the  synchronous  model
  discussed  in the  introduction. Recall  that the  asynchronous
  model introduces  arbitrary ``wake-up''  times $t_A$ and  $t_B$ into
  each  of the  two schedules,  after  which they  proceed with  their
  schedules. Of  course, in this case they  cannot possibly rendezvous
  before time $\max(t_A, t_B)$,  when they are finally both ``awake.''
  Thus, we  say that these two schedules  \emph{rendezvous asynchronously in 
    time  $T$}  if,  for all  $t_A,  t_B  \geq  0$,  there is  a  time
  $\max(t_A,t_B)   \leq   t   \leq   \max(t_A,t_B)  +   T$   so   that
  $\sigma_A(t-t_A) = \sigma_B(t-t_B)$.
  
  For a fixed $(n,k)$-schedule $\Sigma$, we define $R_s(\Sigma)$ to be
  the minimum  $T$ for which  $\sigma_A$ and $\sigma_B$  synchronously rendezvous in
  time  $T$ for all $A,  B \in \setsystem{S}$.  We   likewise   define   
  $R_a(\Sigma)$  for   asynchronous rendezvous. Finally, we define:
  \[
  R_s(n,k) \triangleq \min_{\Sigma} R_s(\Sigma)\quad\text{and}\quad
  R_a(n,k) \triangleq \min_{\Sigma} R_a(\Sigma)\,,
  \]
  where these  are minimized  over all $(n,k)$-schedules  $\Sigma$. Of
  course,  $R_s(n,k)   \leq  R_a(n,k)$,  and   the  simple  randomized
  algorithm described in the introduction suggests that perhaps
  \[
  R_a(n,k) \approx k^2 \,.
  \]
  Finally, we  remark that even a precise  understanding of $R_a(n,k)$
  does   not   necessarily   yield  $n$-schedules   that   guarantee
  satisfactory  bounds on pairwise  rendezvous because  it is  not, in
  general,  clear   how  to  stitch   together  $(n,k)$-schedules  for
  different values of  $k$ to provide guarantees for  pairs of sets of
  different sizes.
  
\paragraph{Notation} We use $[n] = \{ 1, \ldots, n\}$ and invent the shorthand notation
$\ilog n \triangleq \lceil \log_2 n \rceil$. 
Whenever a variable, $x$, represents a natural number, we use $x_2$ to denote the 
canonical base-two encoding of $x$, zero-padded on the left out to length $\ilog m$, 
where $m$ is the maximum value that $x$ might take.




\section{Schedules for efficient rendezvous}\label{sec:constructions}

\paragraph{Sets of size two}
We begin with a construction of a family of schedules for channel sets
of size 2 that achieves rendezvous in time $O(\log \log n)$; these
will be used as a subroutine for the general construction.  We shall
see in Section~\ref{sec:lower-bounds} that these schedules are within
a constant of optimal.  Thus, the goal of this section is to prove the
following theorem.

\begin{theorem}\label{thm:upper-2}
  For all $n > 0$, $R_a(n,2)  = O(\log \log n)$. Specifically, for any
  $n > 0$, there  is an $(n, 2)$-schedule so that for  any two sets $A$
  and  $B$   of  size   two,  $\sigma_A$  and   $\sigma_B$  rendezvous
  asynchronously in time no more than $O(\log \log n)$.
\end{theorem}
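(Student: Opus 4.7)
The plan is to build, for each pair $\{a,b\} \subseteq [n]$, a periodic schedule $\sigma_{\{a,b\}}$ of period $T = O(\log\log n)$, and to show that two overlapping pairs always rendezvous within $T$ consecutive slots under every asynchronous shift. The shape of the bound suggests a scale-matched encoding: each channel $c \in [n]$ is described by $k = \ilog n$ bits, and hence a single bit-position within this description needs only $\ilog k = O(\log\log n)$ bits to address---exactly the time budget. This motivates a schedule in which each slot ``probes'' a bit of the pair, with the slot indices arranged so that in any length-$T$ window, a probe lands on a bit that distinguishes the two pairs.

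Concretely, I would fix binary encodings $c \mapsto c_2 \in \{0,1\}^k$ of channels. For each pair $\{a,b\}$, I would design a periodic sequence of length $T$ in which slot $t$'s visit (to $a$ or $b$) is determined by a small combinatorial rule on the bits of $a$ and $b$ at a position named by the $O(\log\log n)$-bit binary encoding of $t$. To analyze rendezvous, take overlapping pairs $A = \{x,y\}$ and $B = \{x,z\}$ with $y \neq z$, sharing channel $x$, and an arbitrary shift $\Delta$; since $y \neq z$, their encodings differ at some bit-position $i^\star$. The invariant I would aim for is this: in any length-$T$ window, under any cyclic shift, at least one slot probes the distinguishing bit $i^\star$, and at that slot the visit rule forces \emph{both} pairs to visit $x$ rather than splitting to $y$ and $z$. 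Because the schedule has period $T$, arbitrary asynchronous shifts reduce to cyclic shifts $\Delta \in \{0, \ldots, T-1\}$, which are absorbed by the length of our window.

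The main obstacle is realizing this invariant in a schedule of length only $O(\log\log n)$, given that the distinguishing index $i^\star$ can take $k = \log n$ possible values. A naive one-probe-per-bit layout would demand length $k$, vastly exceeding our budget. The compression must come from a hierarchical or self-similar addressing scheme, in which a single slot simultaneously ``covers'' many bit-positions of the pair, and in which asynchronous shifts permute but do not erase the probe structure. Additional care is required for the role of $x$ within each pair (smaller vs.\ larger element) and for the residue of the shift; these can likely be dispatched by a built-in symmetry, for example by interleaving complementary probes at even and odd positions so that all role/shift combinations occur within one period. Verifying this invariant across every choice of $(x,y,z,\Delta)$---and confirming that the schedule length really does not blow up past $O(\log\log n)$---is where I expect the technical work to concentrate.
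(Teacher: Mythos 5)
Your proposal correctly identifies the scale counting (a bit-position inside a $\ilog n$-bit channel encoding costs only $\ilog\ilog n$ bits to name) and correctly reduces asynchrony to cyclic shifts of a periodic schedule, but it stops exactly at the point where the proof actually lives. You write that ``the compression must come from a hierarchical or self-similar addressing scheme'' in which each slot probes bit positions of the channels --- and then leave that scheme unconstructed. This is a genuine gap, and in the form you pose it the obstacle is not surmountable: if slot $t$ probes a position $p(t)$ determined by $t$ alone, then a period-$T$ schedule can only ever probe $T = O(\log\log n)$ of the $\ilog n$ candidate positions, while the distinguishing index $i^\star$ between $y$ and $z$ ranges over all $\ilog n$ of them. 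No layout of slot-indexed probes covers every $i^\star$. You also silently exclude the case $A=B$ (two agents with the identical pair but different wake-up times must still rendezvous), which requires a separate condition --- realizing both $(0,0)$ and $(1,1)$ as simultaneous choices --- that your invariant (``both pairs visit $x$ at the distinguishing slot'') does not address.

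The missing idea is to invert the quantifiers: the distinguished bit position is chosen once per \emph{pair}, not once per slot. The paper colors the edge $(a,b)$ with $a<b$ by any index in $X_b\setminus X_a$ (a position where $b$'s encoding has a $1$ and $a$'s has a $0$); this is a 2-Ramsey coloring of the linear order with palette size $\ilog n$, meaning any two edges forming a directed path $a<b<c$ receive different colors. The case analysis then splits not on ``which bit distinguishes $y$ from $z$'' but on whether the two overlapping pairs form a directed path (common element is the max of one and the min of the other) or share a common max or common min. In the path case the colors differ and rendezvous follows if the two schedules realize both $(0,1)$ and $(1,0)$ at some common slot; in the non-path case (including $A=B$) it suffices to realize $(0,0)$ and $(1,1)$. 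The remaining work --- which your proposal does not touch --- is encoding a $\ilog\ilog n$-bit color into a cyclic binary string so that these conditions hold under \emph{every} relative shift; the paper does this by making the encoded strings balanced (so distinct strings of equal weight must realize both $(0,1)$ and $(1,0)$), strictly Catalan (so no nontrivial cyclic shift of one codeword equals another codeword), and $2$-maximal (so no codeword is a shift of another's complement), via the Knuth balanced-code mapping plus small injective adjustments, all within length $O(\log\log n)$. Your write-up would need both the coloring step and this shift-robust encoding to become a proof.
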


The size 2 construction is based  on the remarkable fact that there is
an edge coloring of the linear  poset, using only $\ilog n$ colors, for
which no  path of length two is  monochromatic. Specifically, consider
the directed  graph $L_n =  (V_n, E_n)$, with  vertex set
  $V_n = [n]$ and directed edges $E_n = \{ (a, b) \mid a < b\}$.
A  \emph{2-Ramsey edge  coloring} of  $L_n$  is a  mapping $\chi:  E_n
\rightarrow P$  with the property that $\chi(a,b)  \neq \chi(b,c)$ for
any pair of  directed edges $(a,b)$ and $(b, c)$  that form a directed
path of length 2.

\begin{lemma}\label{lem:coloring}
The graph  $L_n$ has a 2-Ramsey  edge coloring with a  palette of size
$\ilog n$.
\end{lemma}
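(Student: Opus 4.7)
The plan is to give an explicit edge coloring based on the binary expansions of the vertices and then verify the 2-Ramsey property directly. Identify each vertex $v \in V_n$ with its binary encoding $v_2$ as a bit string of length $\ilog n$ (using $0$-indexing of $[n]$, or equivalently representing $\{1,\ldots,n\}$ padded to $\ilog n$ bits). For a directed edge $(a,b)$ with $a < b$, define the color $\chi(a,b)$ to be the \emph{most significant bit position} at which $a_2$ and $b_2$ differ. This is well-defined since $a \neq b$, and since there are exactly $\ilog n$ bit positions, the palette has size $\ilog n$ as required.

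The key observation is that because $a < b$ and $\chi(a,b) = i$ is the most significant disagreeing position, $a_2$ must have a $0$ in position $i$ while $b_2$ must have a $1$ in position $i$ (any other arrangement would force $a > b$ after agreement on the higher bits). Thus the color of $(a,b)$ carries not just the position of disagreement but also the \emph{direction} of the bit change: $0 \to 1$ going from the smaller to the larger vertex.

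To verify the 2-Ramsey property, consider an arbitrary directed path $a < b < c$ and suppose for contradiction that $\chi(a,b) = \chi(b,c) = i$. By the observation above applied to $(a,b)$, the $i$-th bit of $b_2$ equals $1$. Applied to $(b,c)$, the $i$-th bit of $b_2$ equals $0$. This is a contradiction, so $\chi(a,b) \neq \chi(b,c)$ for every directed path of length two.

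There is essentially no obstacle here beyond producing the right coloring; the argument is a one-line parity/bit-flip consistency check once the MSB-of-symmetric-difference rule has been written down. The only subtle point worth flagging in the proof is the justification of the ``$0 \to 1$ at the top disagreeing bit'' claim, which follows immediately from the definition of lexicographic (equivalently, numeric) comparison on fixed-length bit strings.
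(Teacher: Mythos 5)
Your proof is correct and is essentially the paper's argument: the paper colors $(a,b)$ with an arbitrary bit position lying in $X_b \setminus X_a$ (where $X_v$ is the set of $1$-bit positions of $v_2$), and your most-significant-disagreeing-bit rule is just a canonical choice of such a position, verified by the same observation that color $i$ on $(a,b)$ forces bit $i$ of $b$ to be $1$ and bit $i$ of $a$ to be $0$. No gaps; your explicit handling of the indexing/padding detail is if anything slightly more careful than the paper's.
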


\begin{proof}
With hindsight, associate with each vertex $k \in V_n$ the set
\[
X_k = \{ i \mid \text{the $i$th bit of $k_2$ is a 1}\} \subset \{ 1,
\ldots, \ilog n\}\,.
\]
Observe that if
$a < b$, there is an element  in $X_b \setminus X_a$. In this case, we
may safely color  the edge $(a,b)$ with any  element of $X_b \setminus
X_a$,  as it  follows immediately  that any  pair of  edges  forming a
directed path must have distinct  colors. The scheme uses no more than
$\ilog n$ colors.
\end{proof}

\begin{proof}[Proof of Theorem~\ref{thm:upper-2}]
 We begin with  a construction for the simpler  synchronous model, and
 then show how to reduce the asynchronous model to this case.

\smallskip
\noindent
\textsl{The synchronous model.}
In the synchronous model, we will simplify the presentation by
discussing finite length schedules with the understanding that
rendezvous is guaranteed by the time the schedule has been exhausted.
Consider now a subset of two channels $A = \{a_0, a_1\}$, where $a_0 <
a_1$. We will treat such size-two subsets as directed edges of the
linear poset (directed from the smaller element to the larger
element). In this size-two case,  we may express  a schedule as a binary
string $s_0 s_1 s_2 \ldots  \in \{0,1\}^*$ with the convention that at
time $t$, the  schedule calls for $a_{s_t}$: thus, when  $s_t = 0$ the
schedule calls  for the smaller of  the two channels; when  $s_t = 1$,
the schedule calls for the larger of the two channels.

Consider now a pair of overlapping subsets $A = \{ a_0, a_1\}$ and $B
= \{b_0, b_1\}$ with $a_0 < a_1$ and $b_0 < b_1$. When these two edges
form a directed path (so that their common element is the larger of
one set and the smaller of the other), a sufficient condition for two
schedules $r_0r_1 \ldots r_{\ell-1}$ and $s_0 s_1\ldots s_{\ell-1}$ to
rendezvous is that each of the two tuples $\{ (0,1), (1,0)\}$ can be
realized as $(r_t, s_t)$ for some $t$, which is to say that
\begin{equation}\label{eq:rendezvous-condition-odd}
\{ (0,1), (1,0) \} \subset \{  (r_t, s_t)  \mid 0  \leq t  < \ell  \}\,.
\end{equation}
We reserve the notation $r \oddpairs s$ to denote the statement that the
strings $r$ and $s$ satisfy
condition~\eqref{eq:rendezvous-condition-odd}. Likewise, when $\{a_0,
a_1\}$ and $\{ b_0, b_1\}$ do not form a path of length two (that is,
share a common largest or smallest element), a sufficient condition
for rendezvous is that
\begin{equation}\label{eq:rendezvous-condition-even}
\{ (0,0), (1,1) \} \subset \{  (r_t, s_t)  \mid 0  \leq t  < \ell  \}\,.
\end{equation}
We reserve the notation $r \evenpairs s$ to denote the statement that
$r$ and $s$ satisfy~\eqref{eq:rendezvous-condition-even}.

In the remainder of the proof we identify a map $x \mapsto
\encode{x}$ with the property that
\begin{align}\label{eq:encoding-property-even}
  x = y \quad &\Rightarrow \quad \encode{x} \evenpairs \encode{y}\,,\\
 \label{eq:encoding-property-odd} x \neq y \quad &\Rightarrow \quad \encode{x} \oddpairs \encode{y}\,.
\end{align}
With such a map in hand, we adopt the schedule $\encode{\chi(\alpha,
  \beta)_2}$ for the set $\{\alpha, \beta\}$, where $\chi$ is the edge
coloring of Lemma~\ref{lem:coloring}. Observe that if $A = \{ a_0,
a_1\}$ and $B = \{ b_0, b_1\}$ form a path of length two,
$\chi(a_0,a_1) \neq \chi(b_0, b_1)$ and this schedule guarantees
rendezvous by dint of
property~\eqref{eq:encoding-property-odd}. Otherwise, these schedules
guarantee rendezvous by dint of
property~\eqref{eq:encoding-property-even}.



We return to the problem of constructing the map $\encode{\cdot}$.
By adopting the convention that all schedules start with the prefix
$01$, we can immediately guarantee property~\eqref{eq:encoding-property-even}: $(0,0)$ and $(1,1)$ appear in
$\{ (r_t, s_t) \mid 0 \leq t < \ell \}$.
It  is  easy to  check  that  the map  $x  \mapsto  01  \circ x  \circ
\overline{x}$, where $\circ$  denotes concatenation and $\overline{x}$
the  coordinatewise negation  of $x$,  has the  desired  properties.

A leaner mapping can be obtained by the rule
\[
\encode{x} \triangleq 01 \circ x \circ \overline{\wt(x)_2}\,,
\]
where $\wt(x)$ denotes the weight (number of 1s) of the string $x$.  
To see that this encoding has property~\eqref{eq:encoding-property-odd}, observe that when $\wt(x) =
\wt(y)$, both $(0,1)$ and $(1,0)$ must appear in the set $\{ (x_i,
y_i) \mid 1 \leq i \leq |x|\}$ (where $x_i$ is the $i^{th}$ bit of $x$) as $x \neq y$ and they have common
weight. When $\wt(x) < \wt(y)$, it follows immediately that $(0,1) \in
\{ (x_i, y_i) \mid 1 \leq i \leq |x|\}$; as for the tuple $(1,0)$, this
must be realized by one of the coordinates of $\overline{\wt(x)_2}$
and $\overline{\wt(y)_2}$ as the canonical encoding of integers in
binary ensures that when $n < m$, there is a coordinate in which $n_2$
contains a $0$ and $m_2$ contains a $1$. The case when $\wt(x) >
\wt(y)$ is handled similarly.

Finally, we remark that when $x$ has length $\ell$, $\encode{x}$ has
length $\ell + \ilog \ell + 2$.
As $L_n$ can be edge colored with a palette of size $\ilog n$, this
yields a family of schedules for sets of size $2$ that guarantees
rendezvous in time no more than
$\ilog \ilog n + \ilog \ilog \ilog n + 2$.

\smallskip
\noindent
\textsl{The asynchronous model.}
We  return  now  to  the  asynchronous  model  described  in  the
introduction, in which  the two agents' schedules are  subjected to an
unknown  shift due  to potentially  distinct start-up  times.  In this
model, we  are obligated  to define schedules  for all  nonnegtive times
(that  is, our  schedules  have  the form  $\sigma:  \N \rightarrow  S
\subset  [n]$);   one  straightforward  method   for  describing  such
schedules is  to adopt \emph{cyclic schedules},  which cyclicly repeat
the same finite sequence of  channels. In particular, if $\sigma: \{
0, \ldots, \ell-1\} \rightarrow  S \subset [n]$, we let $\sigma^{\circ}:
\N  \rightarrow  S$ denote  the  schedule  $\sigma^{\circ}: t  \mapsto
\sigma(t \bmod \ell)$.

Continuing in the spirit of the previous discussion, we observe that
if $r = r_0 \ldots r_{\ell-1}$ and $s = s_0 \ldots s_{\ell-1}$ are two
schedules for a pair of sets $A = \{ a_0, a_1\}$ and $B = \{ b_0,
b_1\}$ forming a path, the cyclic schedules they induce will guarantee rendezvous (in
time $\ell$) if, for all $i$ and $j$,
\begin{equation}\label{eq:two-common-criterion-odd}
\shift^i r \oddpairs \shift^j s\,,
\end{equation}
where $\shift^i x$ denotes the result of cyclicly shifting $x$ forward
$i$ symbols.  To save ink, we define $r \soddpairs s$ to denote the
condition~\eqref{eq:two-common-criterion-odd}: $\shift^i r \oddpairs
\shift^j s$ for all $i$ and $j$. Likewise, we define $r \sevenpairs s$
when $S^i r \evenpairs S^j s$ for all $i$ and $j$. As above, when
these two sets do not form a path, $r \sevenpairs s$ is a sufficient
condition for rendezvous.

Thus our strategy shall be to define a map
$x \mapsto \sencode{x}$ with the property that for two strings $x, y$,
\begin{equation}\label{eq:sencode-criterion}
x = y \Rightarrow \sencode{x} \sevenpairs
\sencode{y} \qquad\text{and}\qquad x \neq y \Rightarrow \sencode{x} \soddpairs
\sencode{y}\,.
\end{equation}
With   such  a  map   defined,  the   construction  follows   that  of
the previous construction: the cyclic  schedule adopted by the pair
$(\alpha,\beta)$ is  given by $\sencode{\chi(\alpha,\beta)_2}$  where $\chi$ is  an edge
coloring of $L_n$.

Anticipating  the construction, we  set down  some terminology.  For a
string $z$, we define  the ``graph'' of $z$ to be the function
$G_z: \{ 0, \ldots, |z|\} \rightarrow \Z$ given by
\[
G_z(0)=0, \quad G_z(k) = \sum_{i = 1}^k (2z_i - 1)
\]
so that $G_z$ traces out the  ``walk'' prescribed by $z$ in which each
$1$ corresponds to a step northeast and each $0$ corresponds to a step
southeast as in  Figure~\ref{fig:graph}.  We say that a binary string $z$
is \emph{balanced} if $\wt(z)  = |z|/2$ (so that $|z|$ is necessarily
even);  equivalently $G_z(|z|)  = 0$,  see  Figure~\ref{fig:balanced}. A
balanced string $z$  is \emph{Catalan} if $G_z$ is  never negative. If
$G_z$ is positive, which is to say that  $G_z(i) > 0$ for all $0 < i <
|z|$,   we    say   that    $z$   is   \emph{strictly    Catalan};   see
Figure~\ref{fig:Catalan}. We  remark that  if $z$ is  Catalan, $1\circ
z\circ  0$  is   strictly  Catalan.  Finally,  we  say   that  $z$  is
\emph{$t$-maximal} if  the set $\{ i  \mid G_z(i) =  \max_j G_z(j) \}$
has  size  exactly  $t$;  the  notion  \emph{$t$-minimal}  is  defined
analogously. Note that a  strictly Catalan sequence $z$ is $1$-minimal
and this  single minimum appears  at $i =  0$.  We remark that  if the
string $z$  is $t$-maximal (or $t$-minimal),  the same can  be said of
all shifts of $z$.

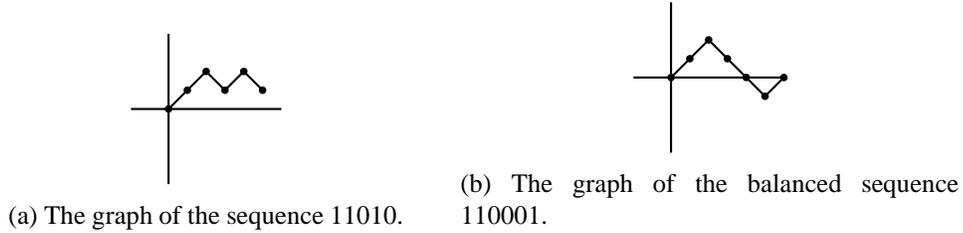
\begin{figure}[ht]
  \centering
  \begin{subfigure}[b]{.4\textwidth}
    \centering
    \begin{tikzpicture}[scale=.25]
      \draw[thick] (0,-4) -- (0,4);
      \draw[thick] (-2,0) -- (6,0);
      \fill[thick] (0,0)  circle (.2);
      \fill[thick] (1,1)  circle (.2);
      \fill[thick] (2,2)  circle (.2);
      \fill[thick] (3,1) circle (.2);
      \fill[thick] (4,2)  circle (.2);
      \fill[thick] (5,1)  circle (.2);
      \draw[thick] (0,0) -- (1,1) -- (2,2) -- (3,1) -- (4,2) -- (5,1);
    \end{tikzpicture}
    \caption{The graph of the sequence $11010$.}
    \label{fig:graph}
  \end{subfigure}
  \begin{subfigure}[b]{.4\textwidth}
    \centering
    \begin{tikzpicture}[scale=.25]
      \draw[thick] (0,-4) -- (0,4);
      \draw[thick] (-2,0) -- (6,0);
      \fill[thick] (0,0)  circle (.2);
      \fill[thick] (1,1)  circle (.2);
      \fill[thick] (2,2)  circle (.2);
      \fill[thick] (3,1) circle (.2);
      \fill[thick] (4,0)  circle (.2);
      \fill[thick] (5,-1)  circle (.2);
      \fill[thick] (6,0)  circle (.2);
      \draw[thick] (0,0) -- (1,1) -- (2,2) -- (3,1) -- (4,0) -- (5,-1) -- (6,0);
    \end{tikzpicture}
    \caption{The graph of the balanced sequence $110001$.}
    \label{fig:balanced}
  \end{subfigure}
  \caption{Graphs and balanced strings.}
\end{figure}

Our strategy is to work with an injective map $\sencode{\cdot}$ with the
property that $\sencode{x}$ is balanced, strictly Catalan, and 2-maximal. Before
describing a construction, we observe that such a map has the
properties outlined in~\eqref{eq:sencode-criterion} above.

Observe, first of all, that if two distinct strings $\sencode{x}$
and $\sencode{y}$ are balanced, it follows immediately that $\sencode{x} \oddpairs \sencode{y}$,
indeed, the number of appearances of $(0,1)$ is the same as the number
of appearances of $(1,0)$ and cannot be zero because the strings are
distinct. Thus, when $\sencode{x}$ and $\sencode{y}$ are balanced, the condition that $\sencode{x}
\not\in \{ \shift^i \sencode{y} \mid i \in [\ell]\}$ is enough to guarantee that
$\sencode{x} \soddpairs \sencode{y}$. Note that if a string $z$ is strictly Catalan, no
nontrivial shift of $z$ can be strictly Catalan. In particular, all
nontrivial shifts of a strictly Catalan string are 1-minimal (as this
is a property enjoyed by strictly Catalan strings) with a different
unique point of minimality. It follows that $x \neq y \Rightarrow \sencode{x}
\soddpairs \sencode{y}$, as desired.


To ensure that $\sencode{x} \evenpairs \sencode{y}$,
when $\sencode{x}$ and $\sencode{y}$ are balanced it suffices to exclude the possibility
that $\sencode{x} = \overline{\sencode{y}}$; similarly, the number of appearances of
$(0,0)$ is the same as the number of appearances of $(1,1)$, and
cannot be zero unless the strings are complements. We conclude that,
for two balanced strings $\sencode{x}$ and $\sencode{y}$, the condition $\sencode{x} \not \in \{
\shift^i \overline{\sencode{y}} \mid i \in [n]\}$ implies that $\sencode{x} \sevenpairs
\sencode{y}$.
Observe that as string $z$ is $k$-maximal if and only if $\overline{z}$ is
$k$-minimal. Thus if $\sencode{x}$ and $\sencode{y}$ are 1-minimal (as they must be if
they strictly Catalan), and 2-maximal, then $\sencode{x} \neq
\overline{\sencode{y}}$. Thus $\sencode{x} \sevenpairs \sencode{x}$ for all $x$, as desired.

It
remains to show that we can efficiently construct such a function.

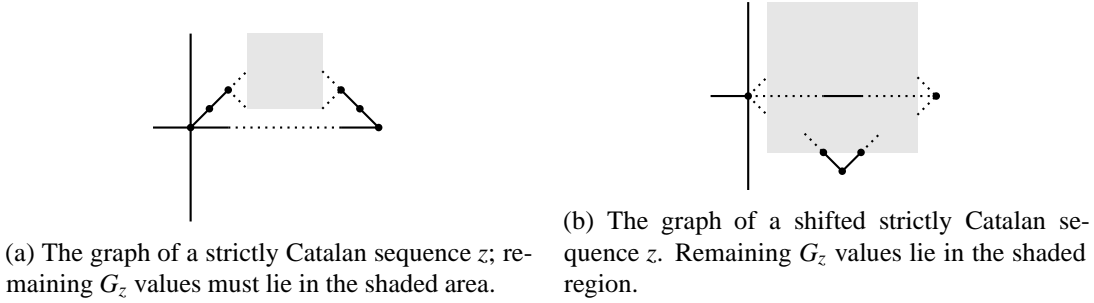
\begin{figure}[ht]
  \centering
  \begin{subfigure}[b]{.42\textwidth}
    \centering
    \begin{tikzpicture}[scale=.25]
      \fill[gray!20] (3,1) rectangle (7,5);
      \draw[thick] (0,-5) -- (0,5);
      \draw[thick] (-2,0) -- (2,0);
      \draw[thick,dotted] (2,0) -- (8,0);
      \draw[thick] (8,0) -- (10,0);
      \fill[thick] (0,0)  circle (.2);
      \fill[thick] (1,1)  circle (.2);
      \fill[thick] (2,2)  circle (.2);
      \fill[thick] (10,0) circle (.2);
      \fill[thick] (9,1)  circle (.2);
      \fill[thick] (8,2)  circle (.2);
      \draw[thick] (0,0) -- (1,1) -- (2,2);
      \draw[thick,dotted] (2,2) -- (3,3);
      \draw[thick,dotted] (2,2) -- (3,1);
      \draw[thick,dotted] (7,3) -- (8,2);
      \draw[thick,dotted] (7,1) -- (8,2);
      \draw[thick] (10,0) -- (9,1) -- (8,2);
    \end{tikzpicture}
    \caption{The graph  of a strictly Catalan  sequence $z$; remaining
      $G_z$ values must lie in the shaded area.}
  \end{subfigure}
\quad
  \begin{subfigure}[b]{.42\textwidth}
    \centering
    \begin{tikzpicture}[scale=.25]
      \fill[gray!20] (1,-3) rectangle (9,5);
      \draw[thick] (0,-5) -- (0,5);
      \draw[thick] (-2,0) -- (0,0);
      \draw[thick,dotted] (0,0) -- (4,0);
      \draw[thick] (4,0) -- (6,0);
      \draw[thick,dotted] (6,0) -- (10,0);
      \fill[thick] (0,0)  circle (.2);
      \fill[thick] (10,0)  circle (.2);
      \fill[thick] (4,-3)  circle (.2);
      \fill[thick] (5,-4)  circle (.2);
      \fill[thick] (6,-3)  circle (.2);
      \draw[thick] (4,-3) -- (5,-4) -- (6,-3);
      \draw[thick,dotted] (3,-2) -- (4,-3);
      \draw[thick,dotted] (6,-3) -- (7,-2);
      \draw[thick,dotted] (0,0) -- (1,-1);
      \draw[thick,dotted] (0,0) -- (1,1);
      \draw[thick,dotted] (9,1) -- (10,0);
      \draw[thick,dotted] (9,-1) -- (10,0);
    \end{tikzpicture}
    \caption{The graph of a shifted strictly Catalan sequence $z$. 
      Remaining $G_z$ values lie in the shaded region.}
  \end{subfigure}
  \caption{Catalan sequences.}\label{fig:Catalan}
\end{figure}

  \begin{figure}[ht]
  \centering
  \begin{subfigure}[b]{.42\textwidth}
    \centering
    \begin{tikzpicture}[scale=.25]
      \begin{scope}
        \fill[gray!20] (1,-5) rectangle (9,4);
        \draw[thick] (0,-5) -- (0,5);
        \draw[thick] (-2,0) -- (0,0);
        \draw[thick,dotted] (0,0) -- (4,0);
        \draw[thick] (4,0) -- (6,0);
        \draw[thick,dotted] (6,0) -- (10,0);
        \fill[thick] (0,0)  circle (.2);
        \fill[thick] (10,0)  circle (.2);
        \fill[thick] (4,3)  circle (.2);
        \fill[thick] (5,4)  circle (.2);
        \fill[thick] (6,3)  circle (.2);
        \draw[thick] (4,3) -- (5,4) -- (6,3);
        \draw[thick,dotted] (3,2) -- (4,3);
        \draw[thick,dotted] (6,3) -- (7,2);
        \draw[thick,dotted] (0,0) -- (1,-1);
        \draw[thick,dotted] (0,0) -- (1,1);
        \draw[thick,dotted] (9,1) -- (10,0);
        \draw[thick,dotted] (9,-1) -- (10,0);
      \end{scope}
    \end{tikzpicture}
    \caption{The graph of a sequence, showing a maximum value.}
  \end{subfigure}
  \begin{subfigure}[b]{.42\textwidth}
    \centering
    \begin{tikzpicture}[scale=.25]
    \begin{scope}[shift={(0,-12)}]
      \draw[thick] (0,-5) -- (0,5);
      \draw[thick] (-2,0) -- (0,0);
      \draw[thick,dotted] (0,0) -- (4,0);
      \draw[thick] (4,0) -- (10,0);
        \draw[thick,dotted] (10,0) -- (14,0);
        \fill[thick] (0,0)  circle (.2);
        \fill[thick] (14,0)  circle (.2);
        \fill[thick] (4,3)  circle (.2);
        \fill[thick] (5,4)  circle (.2);
        \fill[thick,blue] (7,4)  circle (.2);
        \fill[thick,blue] (8,5)  circle (.2);
        \fill[thick,blue] (9,4)  circle (.2);
        \fill[thick,blue] (6,5)  circle (.2);
        \fill[thick] (10,3)  circle (.2);
        \draw[thick] (4,3) -- (5,4);
        \draw[thick] (9,4) -- (10,3);
        \draw[thick,blue] (5,4) -- (6,5) -- (7,4) -- (8,5) -- (9,4);
        \draw[thick,dotted] (3,2) -- (4,3);
        \draw[thick,dotted] (10,3) -- (11,2);
        \draw[thick,dotted] (0,0) -- (1,-1);
        \draw[thick,dotted] (0,0) -- (1,1);
        \draw[thick,dotted] (13,1) -- (14,0);
        \draw[thick,dotted] (13,-1) -- (14,0);
      \end{scope}
    \end{tikzpicture}
    \caption{The sequence after the transformation to $2$-maximality.}
  \end{subfigure}
  \caption{The transformation to $2$-maximality.}
  \label{fig:two-maximal}
\end{figure}
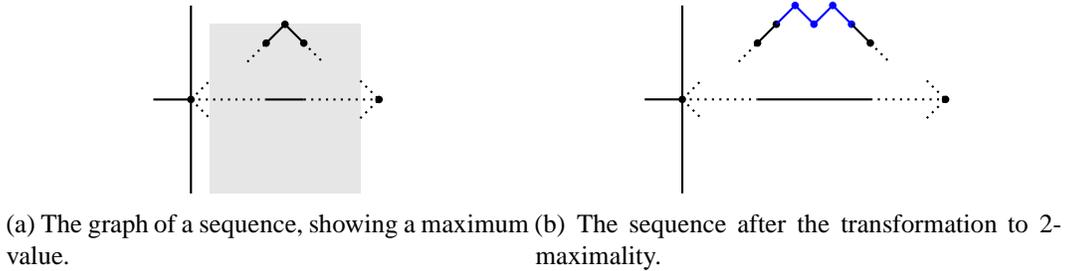
Our  starting  point  shall   be  the  ``Knuth  mapping''  $x  \mapsto
\knuth{x}$ on all the binary strings; this is  an efficient, injective mapping with the
property that $\knuth{x}$ is balanced; moreover,
\[
|\knuth{x}| \leq |x| + \ilog |x| + (1/2) \ilog \ilog |x|\,.
\]
(See~\citet{Knuth:1986fk} for further discussion.) Observe that if $z$
is  balanced,  there is  at  least one  shift  $\shift^c  z$ which  is
Catalan. To yield an invertible process, we consider the map
\[
\unique{z} = (\shift^c z) \circ \underbrace{1^{\ell/2} \circ \knuth{c_2} \circ 0^{\ell/2}}_{(*)}\,,
\]
where $\ell = |\knuth{c_2}|$. Note that the string $(*)$ is Catalan, as $\knuth{c_2}$
is balanced and hence has no more than $\ell/2$ zeros. It follows that
$\unique{z}$ is Catalan (as the concatenation of two Catalan strings
is Catalan). Since the shift $c$ is encoded into $\unique{\cdot}$, the
function is clearly injective. It follows that the map $z \mapsto 1
\circ \unique{\knuth{z}} \circ 0$ is invertible, and carries $z$ to a
\emph{strictly} Catalan image. Finally, we observe that inserting the
string $1010$ at any maximal point in a string $z$ transforms it into
a $2$-maximal string in an invertible fashion (and preserves the other
properties we care about).  We let $\maximal{z}$ denote this
transformation; see Figure~\ref{fig:two-maximal}. To complete the
story, we define
\[
\sencode{z} \triangleq \maximal{1 \circ \unique{\knuth{z}} \circ 0}
\]
and observe that $|\sencode{z}| \leq |z| + 4 \ilog |z| + 16$.
Since $z$ is an edge color with length $\ilog \ilog n$, the theorem is proved.
\end{proof}

\subsection{A general \texorpdfstring{$n$}{n}-schedule}
In this  section we  show how  to apply the  previous result  to yield
$n$-schedules  that provide  rendezvous  in time  $O(|A| |B|  \log\log
n)$. Specifically, we prove the following theorem.

\begin{theorem}\label{thm:upper-general}
  There  is  an  $n$-schedule  so that  for  all
  overlapping  $A,  B \subseteq  [n]$,  the  schedules $\sigma_A$  and
  $\sigma_B$  rendezvous asynchronously in  time  $O(|A| \cdot  |B| \log\log
  n)$.
\end{theorem}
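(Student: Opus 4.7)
The plan is to reduce the general case to the size-2 construction of Theorem~\ref{thm:upper-2}. The schedule $\sigma_A$ for a general set $A$ will cyclically sweep through pairs of elements of $A$, running the size-2 asynchronous schedule of Theorem~\ref{thm:upper-2} on each pair. For overlapping sets $A$, $B$ with common channel $c \in A \cap B$, the argument will exhibit a time during which $\sigma_A$ is running a size-2 schedule on a pair $\{c, a\} \subseteq A$ and, simultaneously, $\sigma_B$ is running one on a pair $\{c, b\} \subseteq B$; Theorem~\ref{thm:upper-2} then delivers rendezvous at $c$ within the next $L = O(\log\log n)$ steps.

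For the construction, I take $A = \{a_1 < \cdots < a_k\}$ and choose a prime $p_A$ with $k \leq p_A \leq 2k$ by Bertrand's postulate. Define $\sigma_A$ as a cyclic schedule of period $2Lp_A = O(|A| \log\log n)$, partitioned into $p_A$ outer slots of length $2L$; assign the $m$-th outer slot to the pair $\{a_{m \bmod k}, a_{(m+1) \bmod k}\} \subseteq A$ and, within that slot, run two consecutive copies of the corresponding size-2 inner schedule furnished by Theorem~\ref{thm:upper-2}. Doubling the slot length from $L$ to $2L$ ensures that any window of length at least $L$ lying entirely within one outer slot contains a complete inner period, so two size-2 inner schedules running inside overlapping outer slots can realize the Theorem~\ref{thm:upper-2} guarantee. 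For correctness I fix overlapping $A$, $B$ with common $c = a_i = b_j$: every outer slot of $\sigma_A$ with $m \equiv i-1$ or $m \equiv i \pmod k$ (a ``$c$-slot'') plays a pair containing $c$, and likewise for $\sigma_B$; under asynchronous shifts the relative phases of $A$'s and $B$'s outer-slot indices cycle, and when $p_A$ and $p_B$ are coprime all $p_A p_B$ configurations are realized within $O(|A||B|)$ outer slots. In some such configuration a $c$-slot of $A$ overlaps a $c$-slot of $B$ by at least $L$ time steps, and Theorem~\ref{thm:upper-2} then forces rendezvous at $c$; multiplying by the slot length $2L$ yields the desired $O(|A||B| \log\log n)$ bound.

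The principal obstacle I expect is the degenerate case $p_A = p_B$ (which occurs when $|A|$ and $|B|$ land in the same Bertrand interval): here the CRT-based coverage of outer-slot alignments collapses. The standard remedy is to concatenate each $\sigma_A$ with a secondary sub-schedule using a different prime outer period (e.g.\ the next prime above $p_A$), arranged so that for every partner set $B$ at least one pairing of $A$'s and $B$'s sub-schedules has coprime outer periods. A careful enumeration then shows this modification inflates the total length by only a constant factor, preserving the target bound $O(|A||B|\log\log n)$.
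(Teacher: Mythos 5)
Your overall skeleton --- epochs each running the size-two schedule of Theorem~\ref{thm:upper-2}, doubling each epoch so that an adversarial shift still leaves a full inner period of overlap, and the Chinese Remainder Theorem to align the epochs in which the common channel $c$ is played --- is exactly the paper's strategy, and your analysis of the coprime case $\gcd(p_A,p_B)=1$ is sound. The gap is the case $p_A=p_B$, which you correctly flag but do not actually repair. Your patch is to concatenate, within each agent's cycle, a second sub-schedule built on the next prime $p_A'$ above $p_A$. But in precisely the problematic case the two agents choose the same $p$ and hence the same $p'$, so their cyclic schedules have identical block structure and identical total period; the relative phase of the blocks is then a single adversarial constant. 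Taking the shift to be $0$ modulo the common period aligns $A$'s $p$-block exactly with $B$'s $p$-block and $A$'s $p'$-block with $B$'s $p'$-block, so there is never a time at which the two agents run sub-schedules with distinct primes. Even for other shifts, the cross-overlap of a $p$-block with a $p'$-block can be arbitrarily short, whereas your CRT step needs roughly $p p'$ consecutive outer slots during which both agents hold a fixed pair of distinct primes --- more than the length of an entire block. So the ``careful enumeration'' you defer to cannot succeed for any time-multiplexed arrangement of the two primes: the phase of the shift always lets the adversary pair like with like.

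The paper's fix is structurally different and is the one idea your proposal is missing: rather than multiplexing the two primes in time, it puts both primes into every epoch. Each agent selects two distinct primes $p,p'\in[k,3k]$ and in epoch $r$ plays the size-two schedule on the pair $(a_i,a_j)$ with $i\equiv r \pmod{p}$ and $j\equiv r\pmod{p'}$. Since each agent's two primes are distinct, any two agents possess a ``helpful'' pair of distinct primes $(p,q)$, one from each agent; the common channel $c=a_x=b_y$ then occurs as a coordinate of $A$'s pair whenever $r\equiv x\pmod{p}$ and of $B$'s pair whenever $r'\equiv y\pmod{q}$, and CRT produces an aligned epoch $r\le pq$ with no phase condition to satisfy. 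If you replace your slot-assignment rule $\{a_{m\bmod k},a_{(m+1)\bmod k}\}$ by this two-prime rule, the rest of your write-up (the doubled epochs, the overlap-by-at-least-$L$ argument, and the $O(|A|\,|B|\log\log n)$ accounting) goes through essentially verbatim.
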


\begin{proof}
  Consider a set $A = \{ a_0, \ldots, a_{k-1}\}$. The schedule for $A$
  depends on a pair of primes $p, p'$ in the range $[k, 3k]$ (there
  always exist two primes in this range). We then construct a schedule
  consisting of a sequence of \emph{epochs}, where the $r$th epoch
  calls for the size-two schedule of Theorem~\ref{thm:upper-2} involving
  the two channels $a_i$ and $a_j$, where $i \equiv r \mod{p}$
  and $j \equiv r \mod{p'}$. (If either $i$ or $j$ do not fall in the range
  $\{0, \ldots, k-1\}$, then we choose an arbitrary element of $A$ to
  fill its place.)

  In the following, we will say a pair of prime numbers $(p, q)$ is \emph{helpful} for the rendezvous of two agents $A$ and $B$ if:
\textsl{(i.)} $p$ is one of the primes selected by the first agent as
described above, \textsl{(ii.)} $q$ is one of the primes selected by the second agent as described
    above, and \textsl{(iii.)} $p\ne q$.
  The construction above specifies that each agent must choose two
  primes to ensure that any two agents are guaranteed to have a \emph{helpful} pair between them.

  Now, suppose $A\cap B  = \{c\}$, and that $c = a_x = b_y$ (so that
  $c$ is the $x^{th}$ channel in
  $A$ and the $y^{th}$ channel in $B$).
  In  the synchronous model,  we  use the  construction described  in
  the proof of Theorem~\ref{thm:upper-2} to  get a schedule  for $(a_i,
  a_j)$ in each epoch. In this case, it suffices to show that there is
  an epoch $r$ satisfying  $r \equiv x \pmod{p}$
  and $r \equiv  y \pmod{q}$, where $p$ and $q$ are a helpful
  pair as described above.  According to  the Chinese Remainder
  Theorem,  there exists  a  solution for  $r$  that is  no more  than
  $pq$. Therefore, in  the worst case, the two agents will both access
  the common  channel at one time, no later than $pq(\log\log    n+   
  \log\log\log n+2)=O(k\ell\log\log n)$ steps after their schedules
  commence.

  The asynchronous model requires only a slight modification.
  Suppose that, for a given epoch, $r$, an agent using the scheme described
  immediately above with subset $A$ executes schedule $\sigma^r_A$
  of length $R$ (all epochs have the same length). 
  Then, we can handle 
  asynchronous rendevous by doubling the length of each epoch and
  executing $\sigma^r_A \sigma^r_A$.
  Assume the commencement time for the $\sigma_A$  is to be $t_a$ and
  the commencement time for $\sigma_B$ is to be $t_b$ where, without
  loss of generality, $t_a \leq t_b$.
  Let $\mu$ denote the closest integer to $\frac{t_b-t_a}{2R} $.
  Then for any $r$, the $r^{th}$ epoch of $\sigma_A$ will overlap with
  the $(r-\mu)^{th}$ epoch of $\sigma_B$ by at least $R$ timesteps.
  For any $r$ such that $r \equiv x \pmod{p}$ and $(r-\mu) \equiv y
  \pmod{q}$,  where the pair $(p,q)$ is helpful,
  then the $r^{th}$ epoch of $\sigma_A$ will overlap with the $(r-\mu)^{th}$
  epoch of $\sigma_B$ no less than $R$ timeslots. Since
  Theorem~\ref{thm:upper-2} guarantees rendezvous between $
  \sigma^r_A$ and any cyclic shift of $\sigma^{r-\mu}_B$, this overlap
  must contain such a rendezvous point.
  
  Again by the  Chinese Remainder Theorem, we know  that  there exists
  a  epoch  $r$ such  that $r-\mu$ is no more than $pq$.  Therefore,
  in the worst case, the two agents will  access the same channel in time
  $2pqR=O(k\ell \log\log n)$ after $t_b$.
\end{proof}

\subsection{A general reduction that guarantees fast symmetric rendezvous}
\label{section:symmetric-reduction}

The rendezvous literature has given special attention to the
\emph{symmetric} case, where $A = B$. For a general
schedule that guarantees rendezvous for all (perhaps distinct) pairs
of sets, one specifically examines the rendezvous time in this
symmetric case. In this section, we observe that any schedule that
guarantees rendezvous for all pairs of sets can be transformed into
one that additionally guarantees $O(1)$ rendezvous time in the
symmetric case, at the expense of a constant blow-up in the rendezvous
time for all other pairs of sets.

Specifically, for a family of schedules $\Sigma = (\sigma_A)_{A
  \subset [n]}$, for each $A \subset [n]$, we define a new schedule
$\hat{\sigma}_A$ as follows: when $\sigma_A$ calls for the channel
$c_1$, $\hat{\sigma}_A$ carries out a short sequence of accesses,
consisting of the channel $c_1$ and the channel $c_0 = \min \{A\}$
(the smallest element of $A$) in the pattern $c_0 c_1 c_0 c_0 c_1 c_1$
repeated twice. The significance of this pattern is that $010011
\sevenpairs 010011$: thus any pair of rotations of $c_0 c_1 c_0 c_0
c_1 c_1$, will yield simultaneous accesses to both $(c_0, c_0)$ and
$(c_1,c_1)$. To ensure that there is sufficient overlap in these short
sequences of accesses, we repeat them twice: as in the proof of
Theorem~\ref{thm:upper-general}, this guarantees that a full rotation
of the sequence overlaps. By a similar argument, it follows that the
time to rendezvous, for any pair of sets, is no more than a constant
factor (12, by this construction) larger than in $\Sigma$. However,
when $A = B$, such a pair will rendezvous (at their smallest element)
in constant time.

\section{Lower bounds}\label{sec:lower-bounds}

In this section we establish that
\begin{enumerate}
\item  $R_s(n,k) = \Omega(\log\log n)$ for any $k \leq n/2$. (Theorem~\ref{thm:lower-2} and Corollary~\ref{cor:lower-wk-k}.)
\item\label{item:large-k-common} $R_s(n,k) \geq k^2$ for all $k = O(\log n/\log\log n)$ and,
  in general, $R_s(n,k) \geq \alpha k$ for all $k \leq
  {n}^{{1}/{2\alpha}}$ (so long as $\alpha \leq k$). (Theorem~\ref{thm:lower-k}.)
\item\label{item:large-k-relative} $R_a(n,k) \geq k^2$ for all $2
  \leq k \leq n/2$. (Theorem~\ref{thm:lower-relative-k}.)
\end{enumerate}
The lower bounds provided by items~\ref{item:large-k-common}
and~\ref{item:large-k-relative} exhibit an enormous gap for large $k$
and, indeed, the behavior of
$R_s(n,k)$ and $R_a(n,k)$ must diverge for $k \approx \sqrt{n}$. In particular, $R_a(n,k) =
\Omega(k^2)$ while there is a simple algorithm that shows that
$R_s(n,k) \leq n$ for all $k$: each
agent hops on channel $t$ at time $t$ when $t$ is in the channel
set, and remains silent otherwise.

\paragraph{The dependence of rendezvous time on $n$.} We begin with two lower bounds that establish that
$R_s(n,k) \rightarrow \infty$ as $n \rightarrow \infty$.

\begin{theorem}\label{thm:lower-2}
  For all $n \geq 2$, $R_s(n,2) = \Omega(\log\log n)$.
  Rendezvous requires at least $\Omega(\log \log n)$ time, even
  in the synchronous model when agents are promised to have sets of size 2.
\end{theorem}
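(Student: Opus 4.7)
The plan is to establish a reduction from synchronous $(n,2)$-schedules of length $T$ to 2-Ramsey edge colorings of $L_n$ with $2^T$ colors, and then to prove the matching combinatorial lower bound that any such coloring requires at least $\ilog n$ colors. Combining these will yield $T \geq \ilog \ilog n = \Omega(\log\log n)$.

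First I will observe that a synchronous schedule for a two-element set $\{a,b\}$ with $a < b$, truncated to the first $T$ time slots, is naturally encoded as a binary string $r \in \{0,1\}^T$ in which $r_t = 1$ indicates that time $t$ accesses the larger channel and $r_t = 0$ the smaller one. Viewing the full schedule as a map $\chi: E_n \to \{0,1\}^T$ that assigns a binary string to each directed edge of $L_n$, consider any directed path $a < b < c$. Synchronous rendezvous between the schedules of $\{a,b\}$ and $\{b,c\}$ requires some time $t$ at which the first selects $b$ (bit $1$) while the second selects $b$ (bit $0$); this is precisely the condition that the pair $(1,0)$ appears as some $(r_t,s_t)$. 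If $\chi(a,b) = \chi(b,c)$, no such $t$ can exist. Hence any schedule that achieves rendezvous in time $T$ induces a 2-Ramsey edge coloring of $L_n$ using a palette of size at most $2^T$.

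Next I will prove the combinatorial lower bound: any 2-Ramsey edge coloring of $L_n$ uses at least $\ilog n$ distinct colors. Given a palette $P$ and coloring $\chi: E_n \to P$, associate with each vertex $v$ the set $f(v) = \{\chi(u,v) : u < v\} \subseteq P$ of colors on its in-edges. The key claim is that $f$ is injective. For any $v < w$, the color $c = \chi(v,w)$ lies in $f(w)$ by definition; on the other hand, $c \notin f(v)$, since any $u < v$ with $\chi(u,v) = c$ would produce a monochromatic directed path $u, v, w$, violating the 2-Ramsey property. So $f(v) \neq f(w)$ whenever $v < w$, which embeds $[n]$ into $2^P$ and forces $|P| \geq \ilog n$.

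Assembling the two steps yields $2^T \geq |P| \geq \ilog n$, so $T \geq \ilog \ilog n$, as desired. The main subtlety I anticipate is identifying the correct vertex invariant $f(v)$; once it is in hand the injectivity argument is essentially one line, and the reduction from schedules to colorings falls out immediately from the binary encoding of size-$2$ schedules already employed in the proof of Theorem~\ref{thm:upper-2}.
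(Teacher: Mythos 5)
Your proof is correct, but it takes a genuinely different route from the paper's. The paper colors the edges of $K_n$ with the $2^T$ possible length-$T$ schedule strings and invokes a Ramsey-type theorem (a monochromatic triangle exists in any $m$-coloring of $K_n$ once $n \geq e\cdot m!$) to produce an ordered triple $i<j<k$ with identical schedules on $(i,j)$ and $(j,k)$; Stirling's estimate then gives $T = \Omega(\log\log n)$. You instead prove the exact converse of Lemma~\ref{lem:coloring}: any 2-Ramsey edge coloring of $L_n$ needs at least $\ilog n$ colors, via the injective vertex invariant $f(v) = \{\chi(u,v) : u<v\}$ (the color $\chi(v,w)$ lies in $f(w)$ but cannot lie in $f(v)$ without creating a monochromatic directed path). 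Both reductions from schedules to colorings are the same in substance --- identical strings on a path $a<b<c$ can never realize the pair $(1,0)$ needed to meet at $b$ --- but your second step is self-contained, avoids Ramsey theory entirely, and is quantitatively sharper: you get $2^T \geq \ilog n$, i.e.\ $T \geq \ilog\ilog n$, whereas the paper's $e(2^T)! \geq n$ only yields $2^T = \Omega(\log n/\log\log n)$. Since the construction of Theorem~\ref{thm:upper-2} achieves $\ilog\ilog n + \ilog\ilog\ilog n + 2$, your argument pins the answer down to lower-order additive terms and shows the $\ilog n$ palette of Lemma~\ref{lem:coloring} is exactly optimal, which the paper's argument does not.
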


\begin{proof}
Consider the complete graph $K_n$, with the interpretation that each
vertex represents a channel and each edge represents a set of size
two. In this case where agents correspond to two channels, we
represent schedules as binary sequences, $s \in \{0,1\}^{\N}$, with the
convention that a $0$ calls for hopping on the smaller channel and $1$
calls for hopping on the larger channel.

Let $\Sigma$ be an
$(n,2)$-schedule which guarantees rendezvous synchronously in $T$. In
this case, we may treat each $\sigma_{(i,j)}$ as a finite length
string in $\{0,1\}^T$, with the understanding that rendezvous is
guaranteed before any schedule is exhausted. Treat the schedules $\sigma_{(i,j)} \in \{0,1\}^T$ as a coloring of the
edges of $K_n$. According to a variant of Ramsey's theorem, any
$m$-coloring of the edges of the complete graph must have a
monochromatic triangle when $n
\geq e m!$. (See, e.g.,~\cite{Graham1990Ramsey}.) Note, however, that
a monochromatic triangle yields, in
particular, an ordered triple $i < j < k$ for which the schedules
associated with $(i,j)$ and $(j,k)$ are identical; such schedules
never rendezvous. It
follows that $e (2^T)! \geq n$ and, by Sterling's estimate $x! \sim
\sqrt{2 \pi x}(x/e)^x$ that $T = \Omega(\log\log n)$.
\end{proof}

\begin{corollary}\label{cor:lower-wk-k}
  For any $k  \leq n/2$, $R_s(n,k) = \Omega(\log \log n)$.
\end{corollary}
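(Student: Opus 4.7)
The plan is to derive the corollary from Theorem~\ref{thm:lower-2} via a padding reduction. Given any $(n, k)$-schedule $\Sigma$ with synchronous rendezvous time $T$, I would construct an $(m, 2)$-schedule with the same rendezvous time $T$ on a universe of size $m$, where the hypothesis $k \leq n/2$ guarantees $m$ is large enough that Theorem~\ref{thm:lower-2} yields $T = \Omega(\log\log m) = \Omega(\log\log n)$.

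To build the reduction, I would fix a ground set $V \subseteq [n]$ of size $m$ and assign to each pair $P \subseteq V$ a ``padding'' set $L_P \subseteq [n] \setminus V$ of size $k - 2$, chosen so that for any two overlapping pairs $P, Q$ the padded $k$-sets $L_P \cup P$ and $L_Q \cup Q$ intersect exactly in $P \cap Q$---a single element. This amounts to an edge-coloring of $K_V$ whose colors are disjoint $(k-2)$-subsets of $[n] \setminus V$: adjacent edges (overlapping pairs) receive disjoint colors, and a parameter count shows the assignment exists whenever $(|V|-1)(k-2) \leq n - |V|$, giving $m = \Theta(n/k)$. Because synchronous rendezvous must land in the intersection of the two channel sets and $|(L_P \cup P) \cap (L_Q \cup Q)| = |P \cap Q| = 1$, the rendezvous of $\sigma_{L_P \cup P}$ and $\sigma_{L_Q \cup Q}$ provided by $\Sigma$ is forced to occur at the unique shared element of $P \cap Q$; projecting $\sigma_{L_P \cup P}$ onto $P$ (mapping any $L_P$-hop to an arbitrary element of $P$) thus yields a valid $(m, 2)$-schedule with rendezvous time $T$.

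The main technical obstacle lies in the parameter count: the straightforward edge-coloring delivers $m = \Theta(n/k)$, which is $n^{\Omega(1)}$---and hence enough to give $\Omega(\log\log n)$---only while $k = O(n^{1-\varepsilon})$ for a constant $\varepsilon > 0$. For $k$ approaching $n/2$ I expect a refinement of the reduction is necessary, for example iterating the padding in stages or choosing the padding sets probabilistically so that $m = n^{\Omega(1)}$ is retained across the entire range $k \leq n/2$; the ``padding plus Theorem~\ref{thm:lower-2}'' core, however, is what drives the bound.
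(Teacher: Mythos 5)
Your padding reduction is essentially the paper's proof of this corollary: the paper takes $V=\{1,\ldots,m\}$ and assigns the pair $\{i,j\}$ the padding set $B_{i+j\bmod m}$, where $B_1,\ldots,B_m$ are disjoint $(k-2)$-subsets of $[n]\setminus V$; since two distinct overlapping pairs have distinct sums mod $m$, adjacent edges receive disjoint padding, the padded $k$-sets intersect in exactly one channel, and restriction yields a valid $(m,2)$-schedule with $m=\lfloor n/(k-1)\rfloor$, to which Theorem~\ref{thm:lower-2} applies. Up to that point your argument is sound and matches the paper.

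The gap is in how you close the range of $k$. First, the refinements you gesture at cannot work: the $m-1$ edges incident to a single vertex of $K_V$ pairwise overlap, so they must receive pairwise disjoint $(k-2)$-sets, forcing $(m-1)(k-2)\leq n-m$ and hence $m=O(n/k)$ for \emph{any} reduction of this shape; neither a probabilistic choice of padding sets nor an iterated construction escapes this counting obstruction. Second, and more importantly, no refinement is needed, because the large-$k$ regime is handled by an elementary bound you have overlooked: $R_s(n,k)\geq k$ for all $k\leq n/2$. Indeed, in fewer than $k$ synchronous steps a schedule $\sigma_A$ omits some channel $a\in A$ entirely, and since $2k-1\leq n$ one can choose $B$ of size $k$ with $A\cap B=\{a\}$, so any rendezvous is forced onto the unvisited channel $a$. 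For $k\geq\log\log n$ this already gives $R_s(n,k)\geq k=\Omega(\log\log n)$, while for $k\leq\log\log n$ your reduction gives $\Omega(\log\log(n/k))=\Omega(\log\log n)$; combining the two cases proves the corollary. Without this trivial bound (or some substitute) your argument does not cover $k$ between $n^{1-o(1)}$ and $n/2$.
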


\begin{proof}
  Write $[n]$ as the disjoint union of two sets $A = \{1, \ldots, m\}$
  and $B = \{ m+1, \ldots, n\}$, where $|B| \geq |A| (k-2) = m(k-2)$;
  our strategy will be to extend the sets of size two in $A$ to a
  family of subsets of $[n]$ of size $k$ in such a way that schedules
  for these extended sets can be ``pulled back'' to schedules for the
  sets of size two (for which the previous lower bound applies). To
  proceed with this idea, we express $B$ as a disjoint union
  $B = \left( B_1 \cup \cdots \cup B_m \right) \cup B_{\text{rest}}$,
  where each $B_i$ has size exactly $k-2$. Now, we consider the $\binom{|A|}{2}$ 
  sets of the form
  \[
  X_{\{i,j\}} \triangleq \{ i, j\} \cup B_{i + j \bmod m}\,,
  \]
  where $i, j \in A$. Let $\Sigma$ be an $(n,k)$-schedule. Observe that a schedule $\sigma_{X_{\{i,j\}}}$ for the set
  $X_{\{i,j\}}$ can be treated as schedule $\check{\sigma}_{\{ i,
    j\}}$ (for $\{i,j\}$) by \emph{restriction}, simply replacing all
  references to elements outside $\{ i, j\}$ with, say, the smaller of
  $i$ and $j$. In general, restriction of an $(n,k)$-schedule to an
  $(n,\ell)$-schedule (for $\ell < k$) does not provide any guarantee
  on rendezvous, even when the original $(n,k)$-schedule
  does. However, the intersection pattern of the sets $X_{\{i,j\}}$
  above is chosen in such a way that the $(m,2)$-schedule
  $\check{\Sigma}$ obtained by defining $\check{\sigma}_{i,j}$ to be
  the restriction of the schedule $\sigma_{X_{\{i,j\}}}$ will
  guarantee rendezvous.

  Consider two subsets $\{ i, j\}$ and $\{ {i'}, {j'}\}$ of $A$,
  each of size two. If these two sets are not identical but share a
  common element, it follows that $i + j \bmod m \neq i' + j' \bmod m$.
  Thus,
  \begin{align}
  &B_{i + j \bmod m} \cap B_{i' + j' \bmod m} = \emptyset \quad \nonumber \\
  &\ \ \ \ \ \ \ \ \ \ \ \text{and} \quad \nonumber \\
  &X_{\{i, j\}} \cap X_{\{{i'}, {j'}\}} = \{ i, j\} \cap \{{i'}, {j'}\} \nonumber
  \end{align}
  If $\sigma_{X_{i,j}}$ and $\sigma_{X_{i',j'}}$ rendezvous, this must
  occur at a channel in $\{i,j\} \cap \{i', j'\}$, and it follows that
  the rendezvous time of the schedule $\Sigma$ is at least that
  of the schedule $\check\Sigma$; we conclude that $R(n,k) \geq R(m,2)$ so
  long as $n \geq m + m(k-2) = m(k-1)$. Thus $R(n,k) \geq
  R(\lfloor(n/(k-1)\rfloor, 2) = \Omega(\log\log n/k)$.

  However, it is clear that $R_s(n,k) \geq k$ for all $k \leq n/2$, so
  the bound above is only relevant when $k = \Omega(\log\log n)$ which
  yields a $\Omega(\log \log n)$ lower bound for all $k$.
\end{proof}

\paragraph{The dependence of rendezvous time on $k$ in the synchronous setting.}
\begin{theorem}\label{thm:lower-k}
  Let $1 \leq \alpha \leq k$ and $k \leq n^{1/(2\alpha)}$. Then
  $R_s(n,k) \geq k\alpha$. In particular, for $k = O(\log n/\log\log
  n)$, $R_s(n,k) \geq k^2$.
\end{theorem}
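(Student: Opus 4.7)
The plan is to argue by contradiction: suppose there is an $(n,k)$-schedule $\Sigma$ for which $R_s(\Sigma) = T \leq k\alpha - 1$, and produce two overlapping $k$-subsets of $[n]$ whose schedules do not rendezvous within $T$ steps.

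The first step is a structural encoding. For each $k$-subset $S = \{s_1 < \cdots < s_k\}$, I would define the \emph{rank type} $\tau_S : [T] \to [k]$ by the relation $\sigma_S(t) = s_{\tau_S(t)}$. A warm-up observation is that $\sigma_S$ must visit every element of $S$: otherwise some $c \in S$ is never called for, and any $B$ with $S \cap B = \{c\}$ (such $B$ exist since $n \geq k^{2\alpha} \geq 2k-1$) cannot rendezvous with $S$. This already yields $T \geq k$ (matching $\alpha = 1$) and guarantees that each $\tau_S$ is surjective; in particular, the number of distinct rank types is at most $k^T$.

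The key observation is that two sets $A = \{a_1 < \cdots < a_k\}$ and $B = \{b_1 < \cdots < b_k\}$ of common rank type rendezvous if and only if their sorted tuples agree at some coordinate, i.e., $a_i = b_i$ for some $i \in [k]$. Hence any common-type pair $A, B$ that overlaps as sets yet satisfies $a_i \neq b_i$ for every $i$ constitutes a non-rendezvous overlapping pair, violating validity. In particular, if $A$ and $B$ share a single element $c$ placed at distinct ranks in the two sets, a short check confirms $a_i \neq b_i$ at every coordinate, so such a configuration immediately yields a contradiction.

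The task therefore reduces to exhibiting such a pair. The ingredients $T \leq k\alpha - 1$ and $k \leq n^{1/(2\alpha)}$ give $k^T < k^{k\alpha} \leq n^{k/2}$, so pigeonholing over rank types produces a family $\mathcal{F}_\tau$ of common type with $|\mathcal{F}_\tau| \geq \binom{n}{k}/k^T$. A second pigeonhole within $\mathcal{F}_\tau$ selects a heavily-incident element $c$ and partitions $\{S \in \mathcal{F}_\tau : c \in S\}$ by the rank of $c$ in each set: $\mathcal{F}_{\tau,c,1} \sqcup \cdots \sqcup \mathcal{F}_{\tau,c,k}$. Whenever two classes $\mathcal{F}_{\tau,c,r}$ and $\mathcal{F}_{\tau,c,r'}$ (with $r \neq r'$) contain sets $S, S'$ whose complements $S \setminus \{c\}$ and $S' \setminus \{c\}$ are disjoint, $(S, S')$ is the desired bad pair.

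The hardest step is making this extraction tight enough to yield exactly $T \geq k\alpha$. Families that avoid bad pairs enjoy a rigid ``block-partition'' structure (each element of $[n]$ has a fixed rank across all containing sets of $\mathcal{F}_\tau$), whose size is bounded by $(n/k)^k$; simple pigeonholing on types alone therefore falls short of the target when $\alpha$ is close to $k$. The argument must leverage both the surjectivity of every $\tau_S$ and the abundant room in $[n]$ afforded by $n \geq k^{2\alpha}$ to further restrict the structure of bad-pair-free families and to force disjoint-complement pairs across two distinct rank classes. The hypothesis $k \leq n^{1/(2\alpha)}$ enters precisely at this refined counting step.
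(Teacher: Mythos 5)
Your setup (rank types, the observation that two common-type sets rendezvous iff their sorted tuples agree in some coordinate, and the reduction to finding an overlapping common-type pair that disagrees everywhere) is internally sound, but the proof stalls exactly where you say it does, and the gap is not a routine refinement---it is the whole theorem. Quantitatively: with $T\leq k\alpha-1$ and $k\leq n^{1/(2\alpha)}$ you have $k^{T}\leq n^{k/2}$ rank types, so pigeonholing gives a monochromatic family of size roughly $\binom{n}{k}/n^{k/2}\approx n^{k(\alpha-1)/(2\alpha)}$; but a bad-pair-free family with the block-partition structure you describe can have size up to $(n/k)^{k}\approx n^{k(2\alpha-1)/(2\alpha)}$, which is larger for \emph{every} $\alpha\geq 1$ (since $\alpha-1<2\alpha-1$), not just for $\alpha$ close to $k$. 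So pigeonholing on rank types can never, by counting alone, force a bad pair, and the ``refined counting step'' you defer is precisely the missing idea. Moreover, your structural claim that bad-pair-free families must have every element at a fixed rank is itself unjustified: two sets sharing an element at different ranks need not form a bad pair if they also share a second element at a common rank, so the $(n/k)^{k}$ bound does not even cleanly delimit the obstruction.

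The paper avoids all of this with a different and much more economical construction. It partitions $[n]$ into $n/k$ \emph{disjoint} $k$-sets $S_1,\ldots,S_{n/k}$; in each $S_i$, within the first $\alpha k-1$ slots, some channel $a_i$ is visited fewer than $\alpha$ times, so its set of visit times fits inside a set $A_i$ of size $\alpha-1$. Pigeonholing over the $\binom{\alpha k-1}{\alpha-1}$ possible $A_i$ (this is where $n\geq k^{2\alpha}$ enters) yields $k$ indices with a common window $Z$, $|Z|=\alpha-1$. The test set $\hat S=\{a_{i_1},\ldots,a_{i_k}\}$ meets each $S_{i_j}$ in exactly the one channel $a_{i_j}$ (disjointness of the partition is doing the work your ``disjoint complements'' condition was meant to do), so $\hat\sigma$ must rendezvous with each $\sigma_{i_j}$ at a time in $Z$; since the sets $\hat\sigma^{-1}(a_{i_j})$ are pairwise disjoint, this forces $|Z|\geq k>\alpha-1$, a contradiction. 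If you want to salvage your route, you would need a genuinely new argument inside a single rank-type class; as written, the proposal is an incomplete plan rather than a proof.
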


\begin{proof}
  Let $\Sigma$ be an $(n,k)$-schedule. Partition the $n$ channels into
  $n/k$ disjoint subsets, $S_1, \ldots, S_{{n}/{k}}$, each of size
  $k$. Suppose, for the sake of contradiction, that $\Sigma$ guarantees rendezvous
  synchronously in less than $\alpha k$. In this case, we focus only on the
  first $\alpha k - 1$ time slots of the schedules and treat each
  $\sigma_A$ as a function defined on $\{1, \ldots, \alpha k - 1\}$. 

  For each $i \in \{ 1,\ldots,{n}/{k} \}$, let $\sigma_i$ denote the
  schedule of subset $S_i$ and observe that some $a_i \in
  S_i$ must appear fewer than $\alpha \leq k$ times in the
  schedule. Letting $\sigma_i^{-1}(a_i) \subseteq \{1,2,\cdots,
  \alpha k -1\}$ denote the set of time indices at which $a_i$ appears in
  $\sigma_i$, we then have $|\sigma_{i}^{-1}(a_{i})| < \alpha$. By
  possibly adding some elements to the set
  $\sigma_{i}^{-1}(a_{i})$, we may construct a set $A_i$,
  containing $\sigma_{i}^{-1}(a_{i})$, of size exactly $\alpha
  -1$.  Observe that there are $\binom{\alpha k - 1}{\alpha -1}$ possible
  values (subsets) that these $A_i$ can assume.
 
  If ${n}/{k}$, the number of disjoint subsets in our original
  partition, exceeds $(k-1)\cdot \binom{\alpha k-1}{\alpha-1}$, then
  there must be at least $k$ of these subsets, say $S_{i_1}, \ldots,
  S_{i_k} $, for which 
  \[
  A_{i_1}
  =\cdots= A_{i_k} = Z\,,
  \]
  for a set $Z$ of size $\alpha - 1 < k$; it follows that
  $\sigma_{i_j}^{-1}(a_{i_j}) \subset Z$ for each $i$.

Finally, let $\hat{S} =\{a_{i_1},
  \ldots, a_ {i_k}\}$ and let $\hat \sigma$ be its schedule in
  $\Sigma$.  For any $j\in \lbrace 1, \ldots, k\rbrace$, $\hat \sigma$
  must rendezvous with $\sigma_{i_j}$, which requires that ${\hat
    \sigma}^{-1}(a_{i_j}) \cap Z \neq \emptyset$. As the
  $\hat{\sigma}^{-1}(a_{i_j})$ are disjoint, this implies that $|Z|
  \geq k$, a contradiction. To satisfy the condition that ${n}/{k} >
  (k-1)\binom{(\alpha-1) k}{\alpha-1}$, it suffices for
  \[
  n \geq k^{2\alpha}\,,
  \]
  where we have applied the coarse bound $\binom{\alpha k-1}{(\alpha-1)}
  \leq \binom{k^2}{\alpha-1}
  \leq k^{2(\alpha-1)}$.
\end{proof}

\paragraph{A stronger lower bound in the asynchronous model.}

Finally, we show that in the asynchronous model, it is possible to
extend the $k^2$ lower bound to all $k$ less than $n/2$. In fact, we
show that in any $n$-schedule, for any $k$ and $\ell$ with $k + \ell \leq
n + 1$ there are sets of size $k$ and $\ell$
that cannot rendezvous asynchronously in time less than $k\ell$.
\begin{theorem}\label{thm:lower-relative-k}
  For all $k \leq n/2$, $R_a(n,k) \geq k^2$. Moreover, for any
  $n$-schedule and any $k$ and $\ell$ for which $k + \ell \leq n+1$,
  there are sets of size $k$ and $\ell$ that require at least $k\ell$
  steps to rendezvous in the asynchronous model.
\end{theorem}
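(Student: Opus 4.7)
Fix an $n$-schedule $\Sigma$ and integers $k, \ell$ with $k + \ell \leq n+1$; set $T = k\ell - 1$. The plan is to produce $A$ of size $k$, $B$ of size $\ell$ with $|A \cap B| = 1$, and a wake-up offset $s \geq 0$, such that $\sigma_A$ and $\sigma_B$ do not rendezvous in time $T$ at shift $s$; this certifies $R_a(\Sigma) \geq k\ell$, establishing both clauses of the theorem (the first is the special case $\ell = k$).

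First I would pigeonhole on $A$: pick any size-$k$ set $A$. Because $\sigma_A$'s first $T$ entries take values in the $k$-element set $A$ and $T/k < \ell$, some channel $c \in A$ satisfies $|U| \leq \ell - 1$, where $U := T_A^c \cap [0, T-1]$ is the set of $c$-positions in the prefix of $\sigma_A$. Since $n - k \geq \ell - 1$, the family $\mathcal{B}$ of size-$\ell$ sets $B$ with $c \in B$ and $A \cap B = \{c\}$ is non-empty; for any such $B$ write $V_B := T_B^c \cap [0, T-1]$. The singleton intersection $A \cap B = \{c\}$ forces any rendezvous between $\sigma_A$ and $\sigma_B$ to occur at channel $c$.

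The central combinatorial ingredient is a shift-coverage counting argument. Asynchronous rendezvous in time $T$ requires every shift $s \geq 0$ to be covered in at least one of two directions: \emph{forward}, meaning $\exists v \in V_B$ with $v + s \in T_A^c$; or \emph{backward}, meaning $\exists u \in U$ with $u + s \in T_B^c$. Each forward-covering pair $(v, t) \in V_B \times T_A^c$ with $t = v+s \in [v, v+T]$ contributes exactly one covered forward shift in $[0, T]$, so the number of forward-covered shifts is at most $|V_B| \cdot |T_A^c \cap [0, 2T]|$; a parallel bound holds for backward. Thus rendezvous in time $T$ demands
\[
|V_B| \cdot |T_A^c \cap [0, 2T]| \;+\; |U| \cdot |T_B^c \cap [0, 2T]| \;\geq\; T + 1 \;=\; k\ell.
\]
The aim is to choose $B \in \mathcal{B}$ (possibly refining $c$) so that this inequality fails, exposing an uncovered shift $s \in [0, T]$.

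The chief obstacle lies in bounding the $\sigma_B$-side quantities for the \emph{predetermined} channel $c$: the schedule $\sigma_B$ is adversarially designed and could a priori concentrate its $c$-plays at the most harmful positions. To overcome this I would introduce a second pigeonhole step across $\mathcal{B}$. Each $B \in \mathcal{B}$ satisfies $\sum_{c' \in B} |T_B^{c'} \cap [0, 2T]| = 2T+1$, so double-counting the incidence $(B, t) \mapsto \mathbf{1}[\sigma_B(t) = c]$ over $\mathcal{B} \times [0, 2T]$ (and, separately, over $\mathcal{B} \times [0, T-1]$) locates a specific $B$ whose $c$-density is simultaneously small in both windows. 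Combining this with the $A$-side sparsity $|U| \leq \ell - 1$ (and, if needed, an additional averaging to keep $|T_A^c \cap [0, 2T]|$ small for the same $c$) drives the displayed sum strictly below $k\ell$ and exhibits the required uncovered shift. The delicate balance between these four sparsity bounds is the main technical hurdle.
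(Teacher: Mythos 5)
Your high-level plan (pigeonhole a sparse common channel, then count covering pairs against the shifts they must cover) is in the right family, but two of its load-bearing steps fail, and they are precisely the places where the paper's proof does something different. First, the symmetric window $T=k\ell-1$ makes your counting inequality unfalsifiable. Against the generic ``balanced'' adversary, in which every $\sigma_S$ plays each channel of $S$ with density $1/|S|$ throughout, one has $|U|\approx\ell$, $|T_A^c\cap[0,2T]|\approx 2\ell$, $|V_B|\approx k$, $|T_B^c\cap[0,2T]|\approx 2k$, so your left-hand side is $\approx 4k\ell$; even the single term $|V_B|\cdot|T_A^c\cap[0,2T]|\approx 2k\ell$ already exceeds the threshold $k\ell$. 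Hence no choice of $A$, $c$, $B$ can make the displayed inequality fail, and this route tops out at a lower bound of order $k\ell/4$ at best, not $k\ell$. The paper escapes this by using windows of very different lengths: $\sigma_B$ is examined on a window of length $r$ (the putative rendezvous time) and $\sigma_A$ on a window of length $R\gg r$, so the number of shifts to cover is about $R$ while the number of covering pairs is at most $R\Delta_A\cdot r\Delta_B\le Rr/(k\ell)$; letting $R\to\infty$ recovers the exact constant $k\ell$.

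Second, and more fundamentally, your mechanism for bounding the $c$-density of $\sigma_B$ does not exist. Every $B\in\mathcal{B}$ contains $c$, so the adversary may set $\sigma_B(t)=c$ for all $t$ and all $B\ni c$; then $\sum_{B\in\mathcal{B}}|T_B^c\cap[0,2T]|=|\mathcal{B}|\,(2T+1)$ and your double count over $\mathcal{B}\times[0,2T]$ returns only the trivial bound. Averaging at a channel that is fixed in advance and present in every member of the family cannot yield density $\approx 1/\ell$. The paper's missing idea is to randomize the common channel itself: it draws $(A,h,B)$ from a joint distribution that is symmetric in $A$ and $B$ (pick $A$ uniformly, then $h$ uniform in $A$, then $B$ consisting of $h$ plus a random $(\ell-1)$-subset of $[n]\setminus A$), so that $h$ is simultaneously a uniform element of $A$ and of $B$. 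This gives $\Exp[k\Delta(h,\sigma_A;R)+\ell\Delta(h,\sigma_B;r)]=2$, whence some instance satisfies the multiplicative bound $\Delta(h,\sigma_A;R)\cdot\Delta(h,\sigma_B;r)\le 1/(k\ell)$ --- something your additive four-term inequality cannot deliver. Without both repairs (asymmetric windows and the correlated random choice of the common element) the argument does not close.
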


\begin{proof}
  Let $\Sigma$ be  an $n$-schedule. We will show  that there exist two
  subsets, $A$  and $B$, such  that $|A| =  k$, $|B|= \ell$,  $|A \cap
  B|=1$, and  $\sigma_A$ and $\sigma_B$ require at  least $k\ell$ time
  steps to rendezvous in the asynchronous model.
  First, consider uniformly random selection of $A, B \subset [n]$
  according to the following process: \textsl{(i.)} select $A$
  uniformly among all the sets of size $k$,  \textsl{(ii.)} select a
  channel $h$ uniformly from $A$, and \textsl{(iii.)} select $B'$
  uniformly at random from all subsets of $[n]\setminus A$ of size
  $\ell-1$ and define $B = B' \cup \{h\}$. We remark that the reversing
  roles of $A$ and $B$ in the above process (initially
  selecting $B$ uniformly among all sets of size $\ell$, selecting $h$
  from $B$, and selecting $A$ by adding $k-1$ random elements of $[n]
  \setminus A$ to $\{h\}$) yields the same
  probability distribution on $(A, B)$.
  
  We let $\Delta(h, \sigma; T)$ denote the density of occurrences
  of $h$ during the first $T$ time steps in schedule $\sigma$:
  $\Delta(h, \sigma; T) \triangleq {|\{t\in [0, T) \mid
    \sigma(t)=h\}|}/{T}$.  (Here the notation $[0,T)$
  denotes $\{ 0, 1, \ldots, T-1\}$.)  For any length-$T$ prefix of the
  schedule $\sigma_A$ for $A$, note that
  \begin{align}
  \Exp\limits_{\substack{A, h\in A}}[\Delta(h, \sigma_A; T)] 
  &= \Exp\limits_{A}\left[\sum \limits_{x\in A} \Pr(h=x)\Delta(x, \sigma_A; T)\right] \nonumber \\ 
  &=\Exp\limits_{A}\left[\frac{1}{k}\sum \limits_{x\in A}\Delta(x, \sigma_A; T)\right] \nonumber \\
  &= \frac{1}{k}\,.
 \nonumber
  \end{align}
  (Here $\Exp[\cdot]$ denotes expectation). Likewise, considering the
  reversed procedure for selecting $A$ and $B$, for any $T'$ we have
  $\Exp[\Delta(h,\sigma_B; T')] = {1}/{\ell}$.
  By linearity of expectation, for any $T, T'$,
  \begin{equation} \label{eq:linexp}
    \Exp\limits_{A, B, h}[k\cdot\Delta(h, \sigma_A; T)+ \ell\cdot\Delta(h, \sigma_B;
    T')]= 2\,.
  \end{equation}
  
  Let $r$ be the minimum integer so that all intersecting subsets, $A$
  and $B$ of sizes $|A| = k$ and $|B| = \ell$, intersect in 
  time $r$; let $R \gg r$. From the expectation
  calculation~\eqref{eq:linexp} it follows that there exist two sets,
  $A$ and $B$, intersecting at an unique element $h$, for which
  $k\Delta(h, \sigma_A; R)+ \ell\Delta(h, \sigma_B; r) \leq
  2$. Observe then that the product
  \[
  k \Delta(h, \sigma_A; R) \cdot
  \ell \Delta(h, \sigma_B; r) \leq 1
  \]
  and hence $\Delta(h, \sigma_A; R) \cdot
  \Delta(h, \sigma_B; r) \leq 1/k\ell$.

  Consider, finally, the circumstances when the schedule $\sigma_A$
  starts at time $0$ and the schedule $\sigma_B$ starts at some time
  $t \in [0, R-r]$.  Let $P=\{(x, y)\in[0, R)\times [0, r) \mid
  \sigma_A(x)=\sigma_B(y)=h, x \geq y\}$. Each such pair $(x,y)$ is a
  possible rendezvous point which can occur only if $\sigma_B$ starts
  at time $x-y$. We have
  \[
  |P|\leq R\cdot
  \Delta(h,\sigma_A; R) \cdot r\cdot \Delta(h,
  \sigma_B; r)\leq \frac{R\cdot r}{k\ell}\,.
  \]
  As rendezvous is guaranteed in the range $[t, t+r)$ for any $t \in [0,
  R-r]$, we  must have $|P| \geq  R-r$ (otherwise, there is a
  time  that is  not covered  by any  rendezvous pair  of  $P$), which
  implies  that ${R\cdot  r}/{k\ell}
  \geq     R-r$    and,     therefore,
  \[
  r     \geq     \frac{R-r}{R} \cdot k\ell\,.
  \]
  As $R\rightarrow \infty$, this quantity approaches $k\ell$.
\end{proof}

\section{Rendezvous with a one-bit beacon}
\label{sec:one-bit}

In this section we consider the rendezvous problem when the agents are
supplied with a ``one-bit random beacon.'' Specifically, we work under
the assumption that  the agents exist in an  environment that supplies
them with  a (common)  uniformly random bit  $c_t \in  \{0,1\}$ during
each  time step $t$;  we assume  that the  $c_t$ are  independent (for
different $t$) and  available to all agents. We  remark that random beacons
have  been  studied  in  a  number  of  related  models~\cite{Rabin83,
  DziembowskiM02} and--in  practice--beacons are available,  e.g., for
GPS receivers in close proximity~\cite{GPS, LeeCC05}.

We shall see that augmenting the basic model with a one-bit beacon can
dramatically reduce the rendezvous time: in particular, with a one-bit
beacon, (asynchronous) rendezvous is possible with high probability in
time  $O(|S_i|  +  |S_j|   +  \log  n)$.  (In  contrast,  asynchronous
rendezvous,  without  such   a  beacon,  requires  time  $\Omega(|S_i|
|S_j|)$.)

For  a number  $n$,  we let  $\mathfrak{S}_n$  denote the  set of  all
permutations  of   the  elements  $\{1,  \ldots,  n\}$,   the  set  of
channels. The schedule for an  agent $i$ with available channels $S_i$
is constructed as follows:
\begin{itemize}
\item  At  time  $t$, the  sequence  $c_1,  \ldots,  c_t$ is  used  to
  determine a permutation $\pi_t \in \mathfrak{S}_n$. (We write $\pi_t
  = \Pi(c_1  \ldots c_t)$, and  discuss below various choices  for the
  function $\Pi$.)
\item The agent hops on  the channel $\arg \min_{a \in S_i} \pi_t(a)$,
  which is to say that the agent  hops on the channel that maps to the
  smallest element of $\{1,\ldots, n\}$ under the permutation $\pi_t$.
\end{itemize}
It remains to describe $\Pi$, the rule that determines the permutation
$\pi_t$ from  the sequence $c_1,  \ldots, c_t$.  For this  purpose, we
recall the notion of a \emph{min-wise family of permutations}.
\begin{definition}
  We   say    that   a   subset   $R    \subset   \mathfrak{S}_n$   is
  \emph{$\epsilon$-min-wise  independent}  if,  for  every  subset  $A
  \subset \{1, \ldots, n\}$ and every element $a \in A$,
  \[
  \Pr_{\pi \in R}[ \pi(a) = \min \{ \pi(a') \mid a' \in A\}]
  \geq \frac{1}{|A|} (1 - \epsilon)\,.
  \]
  (Here $\pi$ is given the uniform distribution in $R$.)
\end{definition}

For any $n$ and  $\epsilon$, Indyk~\cite{Indyk:2001} gave an efficient
construction   of   a   family   of   $\epsilon$-minwise   independent
permutations  that  can  be  represented  with $O(\log  n  \cdot  \log
1/\epsilon)$  bits. In  our setting,  it suffices  to set  $\epsilon =
1/2$; for the remainder of this  section, we let $R_n$ denote a family
of  $1/2$-minwise independent  permutations in  $\mathfrak{S}_n$. Note
that $d  \log n$ bits are  required to represent an  element in $R_n$,
for a fixed constant $d$.

Consider  now two  sets of  channels $S_i$  and $S_j$  and  an element
$\alpha \in S_i  \cap S_j$. If $\pi$ is a  permutation drawn at random
from $R_n$, then
\begin{align}
  &\Pr\bigl[\alpha = \arg \min_{a \in S_i} \pi(a) = \arg \min_{a' \in
    S_j} \pi(a')\bigr]\nonumber\\
  =&\Pr\bigl[\alpha = \arg \min_{a \in S_i \cup S_j} \pi(a) \bigr]
  \geq \frac{1}{2(|S_i| + |S_j|)}\,. \label{eq:one-perm}
\end{align}

\smallskip
\noindent
\textsl{A  simple  $O(\log  n   \cdot  (|S_i|  +  |S_j|))$  rendezvous
  protocol.}  Let  us  consider   the  protocol  induced  by  defining
$\Pi(c_1\ldots c_t)$  to be the  permutation from $R_n$  determined by
the last $d  \log n$ bits of  $c_1\ldots c_t$. At times 
\[
d  \log n, 2d \log  n,  \ldots,  T  d\log   n\,,
\]
 these  selections  from  $R_n$  are
independent.  In light  of~\eqref{eq:one-perm},  the probability  that
each of these permutations failed to induce rendezvous is no more than
\[
\left(1 - \frac{1}{2(|S_i| + |S_j|)}\right)^{T} \leq
e^{-T/(2(|S_i|+|S_j|))}\,.
\]
It follows that for $T = 2\alpha \ln n \cdot (|S_i| + |S_j|))$, the
probability that this protocol fails to rendezvous is no more than
$e^{-\alpha \ln n} = 1/n^{\alpha}$, as desired.

\smallskip
\noindent
\textsl{An  $O(|S_i| +  |S_j|  + \log  n)$  rendezvous protocol.}  The
protocol  above  can   be  improved  by  applying  \emph{deterministic
  amplification}.  The  protocol  described  above  uses  $O(\log  n)$
independent  random   bits,  essentially,  to  produce   a  family  of
independent elements  of $R_n$. By  ``walking on an  expander graph,''
one can achieve  the same performance guarantees with  only $O(|S_i| +
|S_j|  +  \log  n)$  random  bits. Specifically,  one  associates  the
elements  of the  set $R_n$  with  the vertices  of a  constant-degree
expander graph and generates a  collection of elements of $R_n$ by the
following process:  the first  $d \log  n$ bits of  $c_i$ are  used to
generate  a random  element  of  the expander  graph  (and, hence,  an
element of  $R_n$); each subsequent  element of $R_n$ is  generated by
using $O(1)$ bits of the string $c_1, c_2, \ldots$ to take one step in
the  natural random  walk on  the graph.  See~\cite{Hoory:2006}  for a
survey of these  techniques and, in particular, a  description of this
particular form of deterministic amplification.

\bibliographystyle{plainnat} 

\appendix[\texorpdfstring{$0.439$}{0.439}-approximation for one-round graphical rendezvous]

In the rest  of the paper we considered the  problem of minimizing the
number  of rounds  need to  achieve  rendezvous. In  this appendix  we
consider the  problem of maximizing the  number of pairs  of agents that
achieve  rendezvous in  a single  round in  the graphical  case, i.e.,
where all channel sets are of size two.

In the graphical case each agent  can be viewed as an edge between the
corresponding channels  (vertices). The decision of an  agent $(i, j)$
to select channel $i$ can be  viewed as orienting the edge from $j$ to
$i$. And  a pair  of agents achieves  rendezvous if  the corresponding
arcs (oriented  edges) point inwards  towards the same vertex.   It is
easy to see  that the one-round problem can  be viewed equivalently as
the problem of orienting each edge  in a given graph so as to maximize
the number of pairs of directed edges pointed towards the same vertex.
Let us call  such a pair an  in-pair. Similarly let us call  a pair of
edges oriented outwards from the same vertex an out-pair. And pairs of
incident edges that are oriented differently are termed cross-pairs.

Consider the  scheme where each  edge is oriented uniformly  at random
(in one  of the two possible  orientations). A pair  of edges incident
(at  a  vertex)  will  both  point  towards  the  shared  vertex  with
probability  $\frac{1}{4}$.   Thus,   this  simple  randomized  scheme
achieves rendezvous  between $\frac{1}{4}$  of all possible  pairs and
hence is a $0.25$-approximation algorithm.

We now  present a $0.439$-approximation algorithm based  on rounding a
semi-definite  program  (SDP). Our  semi-definite  program is  closely
related  to the  famous  Goemans-Williamson (GW)  program for  MAX-CUT
\cite{GoemansW95}.   Initially,  we  orient  each edge  of  the  graph
arbitrarily. We associate a vector $\vec{e}$ with each (oriented) edge
$e$  of  the  graph  in  the  SDP\@.  One  can  think  of  $\vec{e}$  as
representing  the initial  orientation  of $e$  and  of $-\vec{e}$  as
representing  the opposite  orientation. Note  that this  is different
from the  GW SDP  which associates  a vector with  each vertex  of the
graph.  We say that a pair of vectors is incident if the corresponding
edges are incident.  Now, for  each pair of incident vectors $\vec{e},
\vec{f}$ we associate a sign $\mbox{sgn}_{(\vec{e},\vec{f})}$ which is
$+1$ if the two vectors form an in-pair or out-pair and $-1$ otherwise
(i.e., a cross-pair).

Consider maximizing the following SDP:

\[ \sum_{|e \cap f| = 1} \frac{1 + \mbox{sgn}_{(\vec{e},\vec{f})}*\vec{e}\cdot\vec{f}}{2} \]

Observe that  if the above SDP  were solved over $(-1,  +1)$ then each
term  contributes $1$ if  the corresponding  vectors are  oriented the
same way with  respect to the incident vertex  and $0$ otherwise. Thus
the  above SDP,  if solved  over $(-1,  +1)$ maximizes  the  number of
in-pairs plus the number of out-pairs.

We    solve    the     above    SDP    using    standard    techniques
\cite{GoemansW95}. The vectors in the resulting solution will lie on a
sphere. We  round by choosing  a random hyperplane and  preserving the
orientation of  edges that fall  in one hemisphere while  flipping the
orientation of edges that fall  in the other hemisphere. The above SDP
is basically the GW SDP (with vectors representing edges as opposed to
vertices) and hence an analysis identical to that in \cite{GoemansW95}
yields a $0.878$-approximation to  the problem of maximizing (over all
orientations) the number of in-pairs plus the number of out-pairs. But
this is at least as much as the maximum number of in-pairs achievable.
However, to  achieve both  in-pairs and out-pairs  it is  necessary to
have  two rounds, the first  with the  normal orientation  (i.e., each
agent  selects  the channel  that  its  corresponding  arc is  pointed
towards), and the second with flipped orientations of all edges. Hence
one of the  two rounds must achieve $\frac{1}{2}\times  0.878 = 0.439$
of   the   maximum   number    of   in-pairs   achievable   over   all
orientations. This scheme can be derandomized yielding a deterministic
$0.439$-approximation to the problem of maximizing the number of pairs
of agents achieving rendezvous in one round in the graphical case.
\end{document}